  \theoremstyle{plain}
    \newtheorem{thm}{Theorem}[section]
    \newtheorem{prop}[thm]{Proposition}
   \newtheorem{lemma}[thm]{Lemma}
    \newtheorem{subsec}[thm]{}
\theoremstyle{definition}
    \newtheorem{defn}[thm]{Definition}
        \newtheorem{remark}[thm]{Remark}
    \newtheorem{exam}[thm]{Example}
\theoremstyle{remark}
\newcommand{\largewedge}{\mbox{\Large $\wedge$}}
\title{}
\author{}
\date{}
\begin{document}
\title{Gauge transformations of Jacobi structures and contact groupoids}

\author{Apurba Das}
\email{apurbadas348@gmail.com}
\address{Stat-Math Unit,
Indian Statistical Institute, Kolkata 700108,
West Bengal, India.}

\subjclass[2010]{17B63, 53C15, 53D10, 53D17, 53D35.}
\keywords{Lie algebroids, Jacobi manifolds, Dirac manifolds, gauge transformations, contact groupoids. }

\begin{abstract}
We define gauge transformations of Jacobi structures on a manifold.
This is related to gauge transformations of Poisson structures via the Poissonization.
Finally, we discuss how the contact structure of a contact groupoid is effected by a gauge transformation of the Jacobi structure on its base.
\end{abstract}

\noindent

\thispagestyle{empty}

\maketitle


\vspace{0.2cm}
\section{Introduction}
The notion of gauge transformations of Poisson structures associated with certain closed $2$-forms was introduced by {\v S}evera and Weinstein \cite{sev-wein} 
in connection with Poisson-sigma models. Gauge transformations of Poisson structures also arise in some quantization problems \cite{jurco-schupp-wess}. Roughly, a gauge transformation of a given Poisson structure 
modify its leafwise symplectic forms by means of
the pullback of a globally defined $2$-form. Gauge equivalent Poisson structures share many important properties, namely, they gives rise to same singular foliation
on the manifold, and corresponds to isomorphic Lie algebroid structures on the cotangent bundle. Gauge transformations of Poisson structures was further studied
by Bursztyn and Radko \cite{burs, burs-radko} from the perspective of symplectic groupoids. They also provide a relationship between gauge transformations
and Xu's Morita equivalence of Poisson manifolds.

The most natural framework to study gauge transformations of Poisson structures is that of Dirac structure. Gauge transformations have also been studied in the context of multiplicative Poisson and Dirac structures on a Lie groupoid \cite{burs , ortiz-thesis}. Recently, the present author introduced gauge transformations of Nambu-Poisson structures and showed that these transformations
commute with the reduction procedure \cite{das}.

Our main objective of this paper is the notion of Jacobi manifold, introduced by Lichnerowicz \cite{lich}. A Jacobi structure on a smooth manifold $M$ consists of a pair $(\pi,E)$ of a bivector field
$\pi \in \Gamma(\largewedge^2 TM)$ and a vector field $E \in \Gamma (TM)$ satisfying certain conditions. 
Jacobi structures include symplectic, Poisson, contact and locally conformal symplectic (l.c.s.) structures \cite{leon-lopez-marr-padron}.
A Jacobi structure $(\pi,E)$ on $M$ defines a bundle map
$$ (\pi,E)^\sharp : T^*M \times \mathbb{R} \rightarrow TM \times \mathbb{R}, ~ (\alpha, g) \mapsto (\pi^\sharp \alpha + g E, ~ - \langle \alpha, E \rangle),$$
for $(\alpha, g) \in \Gamma (T^*M \times \mathbb{R})$.
One might expect that the natural framework to study gauge transformations of Jacobi structures is the notion of Dirac-Jacobi structure (also called $\mathcal{E}^1(M)$-Dirac structure) introduced by Wade \cite{wade1}.
A Dirac-Jacobi structure on $M$ consists of a subbundle $L \subset \mathcal{E}^1(M) = (TM \times \mathbb{R}) \oplus (T^*M \times \mathbb{R})$ satisfying certain maximally isotropic and integrability
condition (Definition \ref{dj-defn}). Then $L$ inherits the structure of a Lie algebroid and there is a distinguised $1$-cocycle of this Lie algebroid.
The graph of the bundle map $(\pi,E)^\sharp$ associated to a Jacobi structure defines a Dirac-Jacobi structure. Hence, the $1$-jet bundle $T^*M \times \mathbb{R}$ of a Jacobi manifold $M$ carries a Lie algebroid structure
by identifying this bundle with the graph of $(\pi, E)^\sharp$.

In section \ref{sec-3}, we define an action $\tau : \Omega^1(M) \times \mathbb{DJ} (M) \rightarrow \mathbb{DJ} (M), ~ (B, L) \mapsto \tau_B (L)$ of the abelian group
$\Omega^1(M)$ on the space $\mathbb{DJ} (M)$ of all Dirac-Jacobi structures on $M$. When the Dirac-Jacobi structure $L$ comes from the graph of a Jacobi structure $(\pi,E)$,
then for any $B$, the Dirac-Jacobi structure $\tau_B (L)$ need not be the graph of another Jacobi structure. This amounts to the invertibility of a certain map
and in this case, the new Jacobi structure (denoted by $(\pi_B, E_B)$ or $\tau_B (\pi,E)$) on $M$ is called the gauge transformation of $(\pi,E)$ associated to the
$1$-form $B$. We prove that gauge equivalent Jacobi structures on $M$ gives rise to isomorphic Lie algebroid structures
on $T^*M \times \mathbb{R}$ (cf. Proposition \ref{gauge-iso-lie}). As a remark, we get that gauge equivalent Jacobi structures gives rise to isomorphic Lichnerowicz-Jacobi cohomology.
We show that gauge transformations of contact structures are contact and gauge transformations of l.c.s. structures are l.c.s. (cf. Remarks \ref{lcs-contact-remark}, \ref{contact-remark-1}).
Moreover, any two contact structures on a manifold are gauge equivalent (cf. Remark \ref{contact-remark-2}).

In section \ref{sec-poissonization}, we show that our gauge transformations of Dirac-Jacobi structures is related to the gauge transformations of Dirac structures via the Diracization
process (cf. Proposition \ref{gauge-commute-diracization}). In the particular case, it shows the relation between gauge transformations of Jacobi structures and gauge transformations of Poisson structures (cf. Proposition \ref{gauge-commute-poissonization}).

A contact groupoid is a Lie groupoid $G \rightrightarrows M$ together with a contact $1$-form $\eta \in \Omega^1(G)$ and a multiplicative function $\sigma \in C^\infty(G)$ 
that satisfies certain multiplicativity condition.
Contact groupoids correspond to Jacobi structures on its base. In section \ref{sec-contact-groupoid}, we discuss
how the contact structure of a contact groupoid is effected by a gauge transformation of the Jacobi structure on its base (cf. Theorem \ref{final-thm}).

In section \ref{sec-6}, we deal with gauge transformations of multiplicative Jacobi structures on a Lie groupoid. Namely, we define gauge transformations of Jacobi groupoids and generalized Lie bialgebroids. 

Generalized contact structures are odd analouge of generalized complex structures and generalization of contact structures  \cite{wade2} (see also \cite{vitag-wade, jonas-luca}). Motivated from the notion of $B$-field transformation
of generalized complex structures, in section \ref{sec-7}, we define a similar notion for generalized contact structures. We also noticed that $B$-field
transformation of contact structures (considered as generalized contact structures) need not be contact. Hence, this notion is different than
gauge transformations of contact structures (considered as Jacobi structures).

\medskip

\noindent {\em Conclusions.} We remark that the line bundle approach of Jacobi structures was introduced by Kirillov to study locally conformal Jacobi structures \cite{kirillov}
(see also \cite{marle, bgg}). More precisely, a local Lie algebra structure on $M$ consists of a line bundle $L$ over $M$ together with a Lie bracket
$$\{ -,-\} : \Gamma L \times \Gamma L \rightarrow \Gamma L$$
on the space of sections of $L$, which is local in the sense that, for $u, v \in \Gamma L$ supported in some open set $U \subset M$, the bracket $\{u,v\}$ is supported
in $U$ as well. Kirillov's local Lie algebras are same as Jacobi structures when the line bundle $L$ is trivial. We refer \cite{cra-sal, vitagliano, vitag-wade, jonas-luca} for more details on the
line bundle approach of contact structures, contact groupoids, Dirac-Jacobi structures and generalized contact structures. Finally, we remark that the contents of the present paper can also be discussed in the line bundle framework.

We assume that the reader is familiar with some basics of Lie groupoids and Lie algebroids. See \cite{mac-xu} for details.
Given a Lie groupoid $G \rightrightarrows M$, the source map and the target map are denoted by $\alpha$ and $\beta$, respectively. The space of composable arrows
is defined by $G^{(2)} = \{ (g,h) \in G \times G~|~ \alpha (g) = \beta (h) \}.$
We denote the de Rham differential of a manifold by $d$.

\medskip

\section{Jacobi structures}\label{sec-2}
In this section, we recall some basic preliminaries on Jacobi and Dirac-Jacobi structures on a manifold \cite{igl-marr,kirillov,leon-lopez-marr-padron,wade1, nunes-gallardo}.
\begin{defn}
 Let $M$ be a smooth manifold. A Jacobi structure on $M$ consists of a pair $(\pi, E)$ of a bivector field $\pi \in \Gamma (\largewedge^2 TM)$ and a vector field
$E \in \Gamma (TM)$ such that
$$ [\pi, \pi] = 2 E \wedge \pi  ~~\text{   and  }~~  [E, \pi] = 0 ,$$
where $[~,~]$ denotes the Schouten bracket on multivector fields.
\end{defn}

A Jacobi manifold is a manifold equipped with a Jacobi structure as above. If $E = 0$ then a Jacobi structure is nothing but a Poisson structure.
Moreover, given a Jacobi structure $(\pi, E)$ on $M$ the product manifold $M \times \mathbb{R}$ carries a Poisson structure whose Poisson
bivector is given by
\begin{align}\label{pois-formula}
\widetilde{(\pi,E)} = e^{-t} \big( \pi + \frac{\partial}{\partial t} \wedge E   \big).
\end{align}
This is called the Poissonization of the Jacobi structure $(\pi,E)$ on $M$.

It is important to note that given a Jacobi structure $(\pi, E)$ on $M$, there is a bundle map $(\pi, E)^\sharp : T^*M \times \mathbb{R} \rightarrow TM \times \mathbb{R}$
given by
\begin{align}\label{sharp-map}
 (\pi, E)^\sharp (\alpha, g) = (\pi^\sharp \alpha + g E, - \langle \alpha, E \rangle), ~ \text{ for } (\alpha, g) \in \Gamma (T^*M \times \mathbb{R}).
\end{align}
The set of all hamiltonian vector fields $X_h := pr_1 \circ (\pi, E)^\sharp (d h, h) = \pi^\sharp (d h) + h E,~ h \in C^\infty(M)$,
generates a distribution $\mathcal{D}$ on $M$, called the characteristic distribution. 
Here $pr_1 : TM \times \mathbb{R} \rightarrow TM$ denotes the projection onto the first factor.

\begin{exam}
 A contact manifold is a smooth manifold $M^{2n+1}$ together with a $1$-form $\eta \in \Omega^1(M)$ such that $\eta \wedge (d \eta)^n \neq 0$ at every point.

Given a contact manifold $(M, \eta)$ there exists an isomorphism of $C^\infty(M)$-modules
$$ \flat_\eta : \Gamma (TM) \rightarrow \Gamma (T^*M), ~ X \mapsto i_X d \eta + \eta (X) \eta .$$
The corresponding Jacobi structure $(\pi, E)$ is given by
$$ \pi (\alpha, \beta) = d \eta~ (\flat_\eta^{-1} (\alpha), \flat_\eta^{-1} (\beta) )   ~ \text{ and } ~ E = \flat_\eta^{-1} (\eta),~~ \text{ for } \alpha, \beta \in \Omega^1(M).$$
In this case, the induced bundle map $(\pi, E)^\sharp : T^*M \times \mathbb{R} \rightarrow TM \times \mathbb{R}$ is invertible with inverse
\begin{align}\label{contact-inverse}
((\pi, E)^\sharp)^{-1} (X, f) = (- i_X d \eta - f \eta, ~\eta (X)), ~~\text{ for } (X, f) \in \Gamma (TM \times \mathbb{R}).
\end{align}

Conversely, a Jacobi structure $(\pi, E)$ on $M$ is induced from a contact structure on $M$ if the bundle map $(\pi, E)^\sharp$ is invertible.
\end{exam}

\begin{exam}\label{exam-lcs}
 A locally conformal symplectic (l.c.s.) manifold is a smooth manifold $M^{2n}$ together with a non-degenerate $2$-form $\omega \in \Omega^2(M)$ with the property
that for each $x \in M$ there is an open neighbourhood $U_x$ of $x$ and a function $f \in C^\infty (U_x)$ such that $(U_x, e^{-f} \omega)$ is a symplectic manifold.
Alternatively, a l.c.s. manifold is a manifold $M^{2n}$ together with a non-degenerate $2$-form $\omega \in \Omega^2(M)$ and a closed $1$-form $\theta \in \Omega^1(M)$
such that
$$ d \omega = \theta \wedge \omega$$
(see \cite{vaisman} for more details). Given a l.c.s. manifold $(M^{2n}, \omega, \theta)$, one can define a Jacobi structure $(\pi, E)$ on $M$ by
$$ \pi (\alpha, \beta) = \omega (\flat^{-1} (\alpha), \flat^{-1} (\beta)) ~~~ \text{  ~ and ~} E = \flat^{-1} (\theta),$$
where $\flat : \Gamma (TM) \rightarrow \Gamma (T^*M), ~ X \mapsto i_X \omega$ is the isomorphism of $C^\infty(M)$-modules.
\end{exam}
\begin{remark}
The Poissonization of a contact structure $\eta$ on $M$ is given by a symplectic structure $\widetilde{\eta} = e^t (pr_1^*d \eta + dt \wedge pr_1^* \eta)$ on $M \times \mathbb{R}$.

If $(\omega, \theta)$ defines a l.c.s. structure on $M^{2n}$, then around every point $x$ in $M$ there is a local chart $(U_x ; q^1, \ldots, q^n, p_1, \ldots, p_n )$ and a function $f$ on $U_x$ such that
$$ \omega = e^f \sum_{i}^{} dq^i \wedge dp_i ~~ \text{ and } \theta = df = \sum_{i}^{} \big( \frac{\partial f}{\partial q^i}  dq^i  + \frac{\partial f}{\partial p_i}  dp_i \big).$$
Therefore, the induced Jacobi structure $(\pi, E)$ on $M$ is given by 
$$ \pi = e^{-f} \sum_{i}^{} \frac{\partial}{\partial q^i} \wedge \frac{\partial}{\partial p_i} ~~ \text{ and } ~~ E = e^{-f} \sum_{i}^{} \big(    \frac{\partial f}{\partial p_i} \frac{\partial}{\partial q^i} 
- \frac{\partial f}{\partial q^i} \frac{\partial}{\partial p_i} \big).$$
Hence, around the local chart $(q^1, \ldots, q^n, p_1, \ldots, p_n, t)$ on $M \times \mathbb{R}$, the Poissonization is given by the formula (\ref{pois-formula}) where $(\pi,E)$ is defined above. 
\end{remark}

A Jacobi structure $(\pi, E)$ on $M$ is called 'transitive' if the map
$$ pr_1 \circ (\pi, E)^\sharp : T^*M \times \mathbb{R} \rightarrow TM, ~ (\alpha, g) \mapsto \pi^\sharp \alpha + g E$$
is surjective. The only transitive Jacobi structures are given by
contact structures on odd dimensional manifolds and locally conformal symplectic (l.c.s.) structures on even dimensional manifolds \cite{kirillov}.
More generally, if $(\pi,E)$ is an arbitrary Jacobi structure, the leaves of its characteristic distribution $\mathcal{D}$ carries transitive Jacobi structures.

A conformal change of a Jacobi structure $(\pi, E)$ on $M$ by a nowhere vanishing smooth function $\sigma$ is given by $(\pi_\sigma , E_\sigma)$ where
$\pi_\sigma = \sigma \pi$ and $E_\sigma = \pi^\sharp (d \sigma) + \sigma E$.

\begin{defn}
A Jacobi map between two Jacobi manifolds $(\widetilde{M}, \widetilde{\pi}, \widetilde{E})$ and $(M, \pi,E)$ is a smooth map $\phi : \widetilde{M} \rightarrow M$
which preserves the corresponding bivector fields and vector fields, that is, $\phi_* \widetilde{\pi} = \pi$ and $\phi_* \widetilde{E} = E$. The map $\phi$ is
called a  conformal Jacobi map with respect to a nowhere vanishing function $\sigma \in C^\infty(\widetilde{M})$ if $\phi : \widetilde{M} \rightarrow M$ is a Jacobi map when $\widetilde{M}$ is
equipped with the conformal Jacobi structure $(\widetilde{\pi}_\sigma, \widetilde{E}_\sigma)$. We denote a conformal Jacobi map simply by $(\phi, \sigma) : \widetilde{M} \rightarrow M$.
\end{defn}
\medskip

Next we recall Dirac-Jacobi structure (or $\mathcal{E}^1(M)$-Dirac structure) on a manifold studied by Wade \cite{wade1}. A line bundle approach of this notion was further studied by Vitagliano \cite{vitagliano}. First observe that for any smooth manifold $M$,
the bundle $TM \times \mathbb{R} \rightarrow M$ has a Lie algebroid structure whose bracket and anchor are given by
$$ [(X, f) , (Y, h)] = ([X, Y], X(h) - Y(f)) ~~ \text{~ and ~}~~ \rho(X, f) = X,$$
for $(X, f), (Y, h) \in \Gamma (TM \times \mathbb{R}) \cong \Gamma (TM) \times C^\infty(M)$.
Moreover, for any $k \geq 0$, there is a square zero map $\widetilde{d} : \Omega^k(M) \times \Omega^{k-1}(M) \rightarrow \Omega^{k+1}(M) \times \Omega^{k}(M)$ defined by
$$ \widetilde{d} (\alpha, \beta) = (d \alpha, \alpha - d \beta), ~~~~ \text{ ~ for } (\alpha, \beta) \in \Omega^k(M) \times \Omega^{k-1}(M).$$
For any $(X,f) \in \Gamma (TM) \times C^\infty(M),$ there is also a contraction map $i_{(X,f)} : \Omega^k(M) \times \Omega^{k-1}(M) \rightarrow \Omega^{k-1}(M) \times \Omega^{k-2}(M)$
defined by
$$i_{(X,f)} (\alpha, \beta) = (i_X \alpha + f \beta , - i_X \beta ), ~~~~ \text{ ~ for } (\alpha, \beta) \in \Omega^k(M) \times \Omega^{k-1}(M).$$
Therefore, for any $(X,f) \in \Gamma (TM) \times C^\infty(M),$ one can define an operator $\widetilde{\mathcal{L}}_{(X,f)} :  \Omega^k(M) \times \Omega^{k-1}(M) \rightarrow  \Omega^k(M) \times \Omega^{k-1}(M)$
by the following Cartan like formula
$$ \widetilde{\mathcal{L}}_{(X,f)} := i_{(X,f)} \circ \widetilde{d} + \widetilde{d} \circ i_{(X,f)}.$$
Then one can verify that the following identity holds
$$\widetilde{\mathcal{L}}_{(X,f)} \circ i_{(Y,h)} - i_{(Y,h)} \circ \widetilde{\mathcal{L}}_{(X,f)} = i_{[(X,f),(Y,h)]} , ~~ \text{ for } (X,f) , (Y,h) \in \Gamma(TM) \times C^\infty(M).$$

Hence, we can define a non-degenerate pairing $ \langle\langle -,-\rangle\rangle$ on the space of sections of the bundle
$\mathcal{E}^1(M) := (TM \times \mathbb{R}) \oplus (T^*M \times \mathbb{R})$ by the following
\begin{align}\label{pairing}
 \langle\langle (X,f) \oplus (\alpha, g)  ,  (Y,h) \oplus (\beta , k) \rangle\rangle= \frac{1}{2}  (i_{(X,f)} (\beta, k)   + i_{(Y,h)} (\alpha, g))
\end{align}
and a generalized Dorfman bracket $ \llbracket-,-\rrbracket$ on the space of sections of $\mathcal{E}^1(M)$ by
$$ \llbracket(X,f) \oplus (\alpha, g) , (Y,h) \oplus (\beta , k)  \rrbracket =   ( [(X,f), (Y,h)]  \oplus \widetilde{\mathcal{L}}_{(X,f)} (\beta,k) - i_{(Y,h)} \widetilde{d} (\alpha,g) ) ,$$
for $(X,f) \oplus (\alpha, g) , (Y,h) \oplus (\beta , k)  \in \Gamma (\mathcal{E}^1(M)).$

\begin{defn}\label{dj-defn}
 A Dirac-Jacobi structure on a manifold $M$ is a subbundle
$$L \subset (TM \times \mathbb{R}) \oplus (T^*M \times \mathbb{R})$$
which is maximally isotropic with respect to the pairing $\langle\langle - , - \rangle\rangle$ and such that $\Gamma L$ is closed under the generalized Dorfman bracket $\llbracket -, - \rrbracket$.
\end{defn}

The kernel of a Dirac-Jacobi structure $L$ is defined as $ker ~(L) := L \cap ((TM \times \mathbb{R}) \oplus \{ 0 \}).$
Given a Dirac-Jacobi structure $L$, its opposite Dirac-Jacobi structure is given by
$$ L_{-} := \{ (- X, - f) \oplus (\alpha, g) ~|~ ( X, f) \oplus (\alpha, g) \in L \}.$$

\begin{remark}
Note that if $L$ is a Dirac-Jacobi subbundle then it is equipped with a Lie algebroid structure over $M$. The Lie bracket on $\Gamma L$ is given by the restriction of the generalized 
Dorfman bracket $\llbracket - , - \rrbracket$ and the anchor is given by the projection on $TM$. Moreover, there is a distinguised Lie algebroid
$1$-cocycle given by 
$$ (X, f) \oplus (\alpha, g) \mapsto f, ~~ \text{ for } (X, f) \oplus (\alpha, g) \in \Gamma L.$$
\end{remark}

Given a Dirac-Jacobi structure $L \subset (TM \times \mathbb{R}) \oplus (T^*M \times \mathbb{R})$ on $M$, one can correspond a Dirac structure $\widetilde{L}$
on $M \times \mathbb{R}$ given by following Diracization \cite{igl-wade} process
$$ \widetilde{L} = \{  (X + f \frac{\partial}{\partial t}  ) \oplus e^t (\alpha + g dt) |~ (X,f) \oplus (\alpha , g) \in L    \}    .$$

A characterization of Jacobi structures is given by the following \cite{wade1, nunes-gallardo}.
\begin{prop}\label{jacobi-graph}
 Let $M$ be a smooth manifold and $(\pi, E)$ be a pair of a bivector field and a vector field on $M$. Then $(\pi, E)$ defines a Jacobi
structure on $M$ if and only if
$$ L_{(\pi,E)} := Graph ((\pi,E)^\sharp) = \{ (\pi^\sharp \alpha + g E, - \langle \alpha, E \rangle) \oplus (\alpha , g) |~ (\alpha, g) \in T^*M \times \mathbb{R} \}$$
is a Dirac-Jacobi structure on $M$.
\end{prop}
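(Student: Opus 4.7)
The statement splits into two tasks: (i) verify that $L_{(\pi,E)}$ is always maximally isotropic with respect to $\langle\langle-,-\rangle\rangle$ (this will hold for any pair $(\pi,E)$), and (ii) show that closure of $\Gamma L_{(\pi,E)}$ under the generalized Dorfman bracket $\llbracket-,-\rrbracket$ is equivalent to the two identities $[\pi,\pi]=2E\wedge\pi$ and $[E,\pi]=0$. My plan is to dispatch (i) quickly and then handle (ii) by reducing to the analogous, already-known Dirac/Poisson result via Poissonization.

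For (i), a rank count shows $L_{(\pi,E)}$ has rank $\dim M + 1$, which is half the rank of $\mathcal{E}^1(M)$, so maximality follows as soon as isotropy is checked. Evaluating the pairing (\ref{pairing}) on two sections $(\pi^\sharp\alpha+gE,-\langle\alpha,E\rangle)\oplus(\alpha,g)$ and $(\pi^\sharp\beta+kE,-\langle\beta,E\rangle)\oplus(\beta,k)$ gives $\tfrac12\bigl(\beta(\pi^\sharp\alpha)+\alpha(\pi^\sharp\beta)+g\langle\beta,E\rangle-k\langle\alpha,E\rangle+k\langle\alpha,E\rangle-g\langle\beta,E\rangle\bigr)$, which vanishes purely by the skew-symmetry $\beta(\pi^\sharp\alpha)=-\alpha(\pi^\sharp\beta)$. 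So (i) is automatic with no input from the Jacobi identities.

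For (ii), the cleanest route is to push everything to $M\times\mathbb{R}$ via the Diracization $L\mapsto\widetilde L$ recalled before the proposition. A direct check on sections shows
\[
\widetilde{L_{(\pi,E)}} \;=\; \mathrm{Graph}\bigl(\widetilde{(\pi,E)}^\sharp\bigr),
\]
where $\widetilde{(\pi,E)}=e^{-t}(\pi+\tfrac{\partial}{\partial t}\wedge E)$ is the Poissonization bivector from (\ref{pois-formula}); indeed, applying this sharp to $e^t(\alpha+g\,dt)$ reproduces exactly $\pi^\sharp\alpha+gE-\langle\alpha,E\rangle\tfrac{\partial}{\partial t}$. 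Now I invoke two standard facts: the Diracization process is an integrability-preserving bijection between Dirac-Jacobi structures on $M$ and (a class of) Dirac structures on $M\times\mathbb{R}$ (from \cite{igl-wade}), and the graph of a bivector field is a Dirac structure if and only if that bivector is Poisson. Combining these with the well-known equivalence ``$(\pi,E)$ is Jacobi iff $\widetilde{(\pi,E)}$ is Poisson'' yields the desired characterization.

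The main obstacle is making the Poissonization reduction rigorous: one must verify that Diracization really does send the bracket-closure condition for $L_{(\pi,E)}$ to the bracket-closure condition for its graph counterpart on $M\times\mathbb{R}$, and that no integrability information is lost in translation. If one prefers to avoid citing the Diracization-integrability correspondence, the alternative is a direct calculation: compute $\llbracket\,\cdot\,,\,\cdot\,\rrbracket$ on two arbitrary sections of $L_{(\pi,E)}$ using the explicit formulas for $[\,(X,f),(Y,h)\,]$, $\widetilde{\mathcal L}$, and $\widetilde d$, then project onto the ``$TM\times\mathbb{R}$'' component and demand that the result equals $(\pi,E)^\sharp$ applied to the ``$T^*M\times\mathbb{R}$'' component; organizing the resulting identities in terms of Schouten brackets recovers $[\pi,\pi]=2E\wedge\pi$ (from the purely covector pieces) and $[E,\pi]=0$ (from the mixed pieces involving the $\mathbb{R}$-factors). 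Either way, the Jacobi identities are exactly what is needed and sufficient for integrability.
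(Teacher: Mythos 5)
Your proposal is correct in substance, but note that the paper itself offers no proof of this proposition: it is quoted as a known result with a citation to \cite{wade1, nunes-gallardo}, where the argument is the direct bracket computation you describe as your fallback ``route B.'' Your part (i) is complete and right: isotropy drops out of the skew-symmetry of $\pi$ alone (the $E$-terms cancel in pairs exactly as you wrote), and the rank count $\dim M+1$ gives maximality. Your preferred route for part (ii) --- Diracization to $M\times\mathbb{R}$, the identity $\widetilde{L_{(\pi,E)}}=L_{\widetilde{(\pi,E)}}$, the classical ``graph of a bivector is Dirac iff Poisson,'' and Lichnerowicz's ``$(\pi,E)$ is Jacobi iff its Poissonization is Poisson'' --- is a genuinely different and cleaner argument than the direct computation in the cited sources, and it is consistent with machinery the paper itself develops later (the remark that $\widetilde{L_{(\pi,E)}}=L_{\widetilde{(\pi,E)}}$, and Remark \ref{dj-dirac-p}). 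What this route buys is that all the combinatorics of the generalized Dorfman bracket are outsourced to the Poisson/Dirac case; what it costs is that the whole proof hinges on the one fact you correctly flag as the obstacle, namely that $\Gamma L$ is closed under the generalized Dorfman bracket if and only if $\Gamma\widetilde{L}$ is closed under the Courant--Dorfman bracket on $M\times\mathbb{R}$. That equivalence is precisely the content of the Diracization correspondence in \cite{igl-wade}, so you should either cite it explicitly at that step or verify it by the short section-level computation (extending sections of $L$ to $t$-independent generators of $\Gamma\widetilde{L}$ and comparing brackets); without one of these, the argument is a plan rather than a proof. Your verification that $\widetilde{(\pi,E)}^{\sharp}\bigl(e^{t}(\alpha+g\,dt)\bigr)=\pi^{\sharp}\alpha+gE-\langle\alpha,E\rangle\tfrac{\partial}{\partial t}$ is correct and is the key identification making the reduction work.
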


\begin{remark}
 Let $(\pi, E)$ be a Jacobi structure on $M$ with the corresponding Dirac-Jacobi structure 
$ L_{(\pi, E)}$.
By identifying $T^*M \times \mathbb{R}$ with $ L_{(\pi, E)}$, we get a Lie algebroid structure on $T^*M \times \mathbb{R}$ with bracket
$$ [(\alpha, g), (\beta,k) ]_{(\pi,E)}  = \widetilde{\mathcal{L}}_{(\pi,E)^\sharp (\alpha, g)} (\beta,k)  - i_{(\pi,E)^\sharp (\beta, k)} \widetilde{d} (\alpha, g), ~~ \text{ for } (\alpha, g), (\beta, k) \in \Gamma (T^*M \times \mathbb{R}),$$
and the anchor $\rho : T^*M \times \mathbb{R} \rightarrow TM$ is given by $\rho := pr_1 \circ (\pi,E)^\sharp$. Moreover, the distinguised $1$-cocycle of this Lie algebroid is given by
$(\alpha, g) \mapsto - \langle \alpha , E \rangle$. Hence, it is given by $(-E, 0) \in \Gamma (TM \times \mathbb{R}).$
With this Lie algebroid structure on $T^*M \times \mathbb{R}$, the bundle
map $(\pi,E)^\sharp : T^*M \times \mathbb{R} \rightarrow TM \times \mathbb{R}$ defined in (\ref{sharp-map}) is a Lie algebroid morphism. 

Moreover, we remark that the corresponding Diracization $\widetilde{L_{(\pi, E)}}$ on $M \times \mathbb{R}$ is given by the graph of the Poissonization.
In other words, $\widetilde{L_{(\pi, E)}} = L_{\widetilde{(\pi, E)}}.$ 
\end{remark}
\begin{remark}\label{dj-dirac-p}
Let $L$ be a Dirac-Jacobi structure on $M$ such that its Diracization $\widetilde{L}$ is given by a Poisson structure on $M \times \mathbb{R}$. Then it follows from the definition of $\widetilde{L}$ that it must be of the following form
$$ \widetilde{L} = L_{e^{-t} (\pi + \frac{\partial}{\partial t} \wedge E)}$$
for some bivector field $\pi$ on $M$ and a vector field $E$ on $M$. The Dirac-Jacobi structure $L$ is then given by $L = L_{(\pi,E)}.$ Moreover, by Proposition \ref{jacobi-graph} defines a Jacobi structure on $M$.
\end{remark}

\medskip

Another example of Dirac-Jacobi structure is given by a precontact $1$-form. The term 'precontact' is just a terminology and this is nothing but a usual
$1$-form.
Given a $1$-form $\eta$ on $M$, one can define a Dirac-Jacobi structure on $M$ given by
\begin{align}\label{pre-dj}
L_\eta = \big\{  (X,f) \oplus (i_X d \eta + f \eta, - i_X \eta) | ~(X,f) \in TM \times \mathbb{R}   \big\} \subset (TM \times \mathbb{R}) \oplus (T^*M \times \mathbb{R}).
\end{align}

The next theorem suggests when a Dirac-Jacobi structure $L$ comes from a contact $1$-form $\eta$ \cite{wade2}.
\begin{thm}\label{contact-dirac-jacobi}
 A Dirac-Jacobi structure $L$ on an odd dimensinal manifold $M$ is given by a contact $1$-form $\eta$ if and only if it satisfies
\begin{align*}
L_x \cap (   (T_xM \times \mathbb{R}) \oplus \{0\}) =~& \{0\}, \\
L_x \cap (   \{0\} \oplus (T_x^*M \times \mathbb{R})) =~& \{0\}, 
\end{align*}
for all $ x \in M.$
\end{thm}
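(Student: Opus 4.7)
The plan is to prove both directions by reducing everything to the graph representation established in Proposition~\ref{jacobi-graph} together with the characterization of contact structures among Jacobi structures (invertibility of $(\pi,E)^\sharp$) recalled earlier in the excerpt.

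For the forward direction, suppose $L$ comes from a contact $1$-form $\eta$. Then $L = L_{(\pi,E)}$, the graph of $(\pi,E)^\sharp$ for the associated Jacobi pair, and $(\pi,E)^\sharp$ is invertible with inverse given by formula~(\ref{contact-inverse}). An element of $L_x$ has the shape $((\pi,E)^\sharp_x(\alpha,g))\oplus(\alpha,g)$; for it to lie in $(T_xM\times\mathbb{R})\oplus\{0\}$ we need $(\alpha,g)=0$, giving the first condition, and for it to lie in $\{0\}\oplus(T_x^*M\times\mathbb{R})$ we need $(\pi,E)^\sharp_x(\alpha,g)=0$, so by invertibility $(\alpha,g)=0$, giving the second. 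This direction is essentially a one-line verification.

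The converse is the substantive direction. First I would do a rank count: with $\dim M=2n+1$, the bundle $\mathcal{E}^1(M)$ has rank $4n+4$, and a maximally isotropic subbundle for the pairing $\langle\langle-,-\rangle\rangle$ has rank $2n+2$, which equals the rank of $T^*M\times\mathbb{R}$. The first hypothesis says the projection $\mathrm{pr}: L\to T^*M\times\mathbb{R}$ has trivial kernel, so by dimensions it is a fiberwise isomorphism. Consequently $L$ is the graph of a bundle map $\Phi:T^*M\times\mathbb{R}\to TM\times\mathbb{R}$, which I would write as $\Phi(\alpha,g)=(\Phi_1(\alpha,g),\Phi_2(\alpha,g))$.

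The main step is to show $\Phi = (\pi,E)^\sharp$ for a unique pair $(\pi,E)$. Unwinding the pairing~(\ref{pairing}) on two elements of $L$ yields, for all $(\alpha,g),(\beta,k)$,
\begin{equation*}
i_{\Phi_1(\alpha,g)}\beta + \Phi_2(\alpha,g)\,k + i_{\Phi_1(\beta,k)}\alpha + \Phi_2(\beta,k)\,g = 0.
\end{equation*}
Specializing to $(\alpha,0),(\beta,0)$ forces $\Phi_1(\alpha,0)=\pi^\sharp\alpha$ for a unique bivector $\pi$. Setting $E:=\Phi_1(0,1)$ and using $C^\infty(M)$-linearity gives $\Phi_1(\alpha,g)=\pi^\sharp\alpha+gE$. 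Plugging $(\alpha,g)=(0,1)$, $(\beta,k)=(\beta,0)$ into the isotropy identity yields $\Phi_2(\beta,0)=-\langle\beta,E\rangle$, and $(\alpha,g)=(\beta,k)=(0,1)$ gives $\Phi_2(0,1)=0$, so $\Phi_2(\alpha,g)=-\langle\alpha,E\rangle$. Hence $\Phi=(\pi,E)^\sharp$ and $L=L_{(\pi,E)}$. Integrability of $L$ under the Dorfman bracket then forces $(\pi,E)$ to be a Jacobi structure by Proposition~\ref{jacobi-graph}. The second hypothesis asserts that $\Phi$ is injective; between bundles of equal rank this means $(\pi,E)^\sharp$ is a bundle isomorphism. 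By the contact characterization recalled in the contact example, $(\pi,E)$ is induced by a (unique) contact $1$-form $\eta$, recovered by setting $\eta(X)=\mathrm{pr}_{\mathbb{R}}((\pi,E)^\sharp)^{-1}(X,0)$ via~(\ref{contact-inverse}), which then satisfies $L=L_\eta$.

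The main obstacle, and the only piece that takes real work, is the isotropy computation that pins down $\Phi$ to be of the form $(\pi,E)^\sharp$; the rest is rank counting and invocation of results already in the excerpt. The odd-dimensionality of $M$ enters critically in the rank matching that allows projection onto $T^*M\times\mathbb{R}$ to be an isomorphism.
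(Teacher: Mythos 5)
Your argument is correct. Note, however, that the paper does not prove this statement at all: Theorem \ref{contact-dirac-jacobi} is quoted from \cite{wade2} without proof, so there is no in-paper argument to compare against. Your proof is the natural self-contained one and it assembles correctly from facts the paper does recall: the rank count showing a maximally isotropic subbundle of $\mathcal{E}^1(M)$ has the same rank as $T^*M\times\mathbb{R}$, the first transversality condition turning $L$ into a graph over $T^*M\times\mathbb{R}$, the isotropy identity pinning the graph map down to the form $(\pi,E)^\sharp$ (the computation of $\Phi_1(\alpha,0)$, $\Phi_1(0,1)$, $\Phi_2(\beta,0)$ and $\Phi_2(0,1)$ is exactly right, using that the pairing (\ref{pairing}) reduces to $\frac{1}{2}(\beta(X)+fk+\alpha(Y)+hg)$ on $\Omega^1\times\Omega^0$), Proposition \ref{jacobi-graph} for integrability, and the second transversality condition for invertibility of $(\pi,E)^\sharp$, hence contactness. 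Two small points. First, with the paper's sign conventions the graph of $(\pi,E)^\sharp$ for the Jacobi structure of a contact form $\eta$ is $L_{-\eta}$ rather than $L_\eta$ (the paper records $L_{(\pi_\eta,E_\eta)}=(L_\eta)_-=L_{-\eta}$ in Section \ref{sec-contact-groupoid}); this is harmless since $-\eta$ is again a contact form, but your final sentence should say $L=L_{-\eta}$, or equivalently replace $\eta$ by $-\eta$. Second, the oddness of $\dim M$ is not actually what makes the rank count work (that holds in any dimension); it enters only through the final step, since invertibility of $(\pi,E)^\sharp$ forces the Jacobi structure to be transitive and non-l.c.s., hence contact on an odd-dimensional manifold.
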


\medskip

\section{Gauge transformations}\label{sec-3}
In this section, we introduce gauge transformations of Dirac-Jacobi and Jacobi structures on a manifold.

Let $L \subset (TM \times \mathbb{R}) \oplus (T^*M \times \mathbb{R})$ be a Dirac-Jacobi structure on $M$ and take $(B_1, B) \in \Omega^2(M) \times \Omega^1(M)$ be a pair of a $2$-form and a $1$-form.
Consider the subbundle
\begin{align*}
\tau_{(B_1, B)} (L) :=&~ \{ (X,f) \oplus (\alpha, g) + i_{(X,f)} {(B_1, B)} ~|~ (X,f) \oplus (\alpha, g) \in L \}\\
= &~ \{ (X,f) \oplus (\alpha + i_X B_1 + f B~,~ g - \langle X, B \rangle) ~|~ (X,f) \oplus (\alpha, g) \in L \}.
\end{align*}
It is easy to see that the bundle $\tau_{(B_1, B)} (L) \subset (TM \times \mathbb{R}) \oplus (T^*M \times \mathbb{R})$ is also maximally isotropic with respect to the pairing $\langle \langle ~, ~ \rangle \rangle.$
Moreover, we have the following.

\begin{lemma}
For any $1$-form $B$, the space of sections of the bundle
 $\tau_{(d B,B)} (L)$ is closed under the generalized Dorfman bracket.
\end{lemma}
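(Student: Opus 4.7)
The plan is to compute the generalized Dorfman bracket of two sections of $\tau_{(dB,B)}(L)$ directly and identify the result as again lying in $\tau_{(dB,B)}(\Gamma L)$. The first convenient observation is that the gauge shift admits a compact description using the contraction operator: for $e = (X,f)\oplus(\alpha,g) \in \Gamma L$ the defining formula of $i_{(X,f)}$ gives $i_{(X,f)}(dB,B) = (i_X dB + fB,\, -i_X B)$, so
\[
\tau_{(dB,B)}(e) \;=\; (X,f) \oplus \bigl((\alpha,g) + i_{(X,f)}(dB,B)\bigr).
\]
In particular, $\tau_{(dB,B)}$ does not touch the $TM\times\mathbb{R}$-component of a section, which makes the subsequent bookkeeping tractable.

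Next I would plug two such transformed sections into $\llbracket - , - \rrbracket$. Since the $TM\times\mathbb{R}$-component of the bracket only sees the anchor pieces $(X,f)$ and $(Y,h)$, it is identical to the untransformed bracket. By bilinearity, the $T^*M\times\mathbb{R}$-component is the original contribution plus the extra term
\[
\widetilde{\mathcal{L}}_{(X,f)}\, i_{(Y,h)}(dB,B) \;-\; i_{(Y,h)}\,\widetilde{d}\, i_{(X,f)}(dB,B).
\]
What I want to prove is that this extra piece equals $i_{[(X,f),(Y,h)]}(dB,B)$, since then the full bracket is exactly $\tau_{(dB,B)}\bigl(\llbracket e_1, e_2\rrbracket\bigr)$, which lies in $\tau_{(dB,B)}(\Gamma L)$ because $L$ itself is closed under $\llbracket-,-\rrbracket$.

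The key input, and the only place the specific choice $B_1 = dB$ is used, is that $(dB,B)$ is $\widetilde{d}$-closed:
\[
\widetilde{d}(dB,B) \;=\; \bigl(d(dB),\; dB - dB\bigr) \;=\; (0,0).
\]
Combining this with the Cartan-type formula $\widetilde{\mathcal{L}}_u = i_u \widetilde{d} + \widetilde{d}\, i_u$ and the identity $\widetilde{\mathcal{L}}_u\, i_v - i_v\, \widetilde{\mathcal{L}}_u = i_{[u,v]}$, both recorded in the excerpt, one obtains
\[
\widetilde{\mathcal{L}}_{(X,f)}\, i_{(Y,h)}(dB,B) - i_{(Y,h)}\,\widetilde{d}\, i_{(X,f)}(dB,B) = i_{[(X,f),(Y,h)]}(dB,B) + i_{(Y,h)}\, i_{(X,f)}\,\widetilde{d}(dB,B),
\]
and the trailing summand vanishes by closedness. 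For a general pair $(B_1,B)$ the analogous closedness condition $\widetilde{d}(B_1,B) = (dB_1,\, B_1 - dB) = 0$ would force $B_1 = dB$ anyway, so the restriction in the lemma is the optimal one.

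I do not anticipate a serious obstacle beyond bookkeeping; the main care is tracking degrees under $\widetilde{d}$, $i_{(X,f)}$ and $\widetilde{\mathcal{L}}_{(X,f)}$, which shift between $\Omega^k(M)\times\Omega^{k-1}(M)$ for varying $k$. Once the two algebraic identities above are assembled, the argument is a direct transcription of the \v{S}evera--Weinstein gauge-of-Dirac proof into the $\mathcal{E}^1(M)$-Dirac setting.
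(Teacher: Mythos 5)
Your argument is correct and is essentially identical to the paper's proof: both expand the Dorfman bracket of two gauge-shifted sections, use the Cartan formula together with the identity $\widetilde{\mathcal{L}}_{u}\, i_{v} - i_{v}\, \widetilde{\mathcal{L}}_{u} = i_{[u,v]}$ to reduce the extra terms to $i_{[(X,f),(Y,h)]}(dB,B) + i_{(Y,h)} i_{(X,f)} \widetilde{d}(dB,B)$, and conclude via $\widetilde{d}(dB,B)=0$. No gaps.
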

\begin{proof}
For any $(X,f) \oplus (\alpha, g)  ~,~  (Y,h) \oplus (\beta, k) \in \Gamma L,$ we have
\begin{align*}
& \llbracket (X,f) \oplus (\alpha, g) + i_{(X,f)} (d B,B)~ ,~ (Y,h) \oplus (\beta, k) + i_{(Y,h)} (d B,B) \rrbracket \\
=&~ [(X,f), (Y,h)] \oplus ( \widetilde{\mathcal{L}}_{(X,f)} (\beta,k) + \widetilde{\mathcal{L}}_{(X,f)} i_{(Y,h)} (d B,B)
- i_{(Y,h)} \widetilde{d} (\alpha,g) - i_{(Y,h)} \widetilde{d} i_{(X,f)} (d B,B) )\\
=&~ [(X,f), (Y,h)] \oplus ( \widetilde{\mathcal{L}}_{(X,f)} (\beta,k)  -   i_{(Y,h)} \widetilde{d} (\alpha,g)        + \widetilde{\mathcal{L}}_{(X,f)} i_{(Y,h)} (d B,B)-   i_{(Y,h)}   \widetilde{\mathcal{L}}_{(X,f)} (d B, B) \\
 & + i_{(Y,h)} i_{(X,f)} \widetilde{d}  (d B,B))  \\
=&~ [(X,f), (Y,h)] \oplus ( \widetilde{\mathcal{L}}_{(X,f)} (\beta,k)  -   i_{(Y,h)} \widetilde{d} (\alpha,g)   + i_{[(X,f),(Y,h)]} (d B,B)   + i_{(Y,h)} i_{(X,f)} \widetilde{d}  (d B,B) ).
\end{align*}
Since $\widetilde{d} (d B , B) = 0$, it follows that $\Gamma (\tau_{(d B, B)} (L))$ is closed under the generalized Dorfman bracket. 
\end{proof}

Therefore, for any $1$-form $B \in \Omega^1(M)$, the transformation $\tau_{(d B, B)} (L)$ defines a Dirac-Jacobi structure on $M$.
We denote this Dirac-Jacobi structure simply by $\tau_B (L)$ and is called the gauge transformation of $L$ associated
to the $1$-form $B$. The Dirac-Jacobi structure $L$ and $\tau_B (L)$ are called gauge equivalent. We remark that when $L = L_\eta$, the gauge transformation $\tau_B(L)$ is given by
$L_{\eta + B}.$

The proof of the following is obvious.
\begin{prop}\label{gauge-alg-iso}
 Gauge transformations of Dirac-Jacobi structures satisfy the following properties.
\begin{itemize}
 \item[(i)] $\tau_0 (L) = L$ and $\tau_{B'} (\tau_B (L)) = \tau_{B} (\tau_{B'} (L)) = \tau_{B + B'} (L)$, hence, gauge transformations defines an action of the abelian group
$\Omega^1(M)$ on the space $\mathbb{DJ}(M)$ of all Dirac-Jacobi structures on $M$.
 \item[(ii)] The map $\tau_B : L \rightarrow \tau_B(L)$ defined by $(X,f) \oplus (\alpha, g) \mapsto (X,f) \oplus (\alpha, g) + i_{(X,f)} (d B , B)$
defines an isomorphism between Lie algebroid structures and under this isomorphism, the distinguised $1$-cocycles are same.
\end{itemize}
\end{prop}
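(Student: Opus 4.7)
The statement splits into two essentially computational parts, both of which reduce to the identities already assembled in the proof of the preceding lemma, so my plan is mostly to make the bookkeeping explicit rather than to produce new estimates.

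For part (i), the equality $\tau_0(L) = L$ is immediate: when $B=0$ we have $dB=0$, and $i_{(X,f)}(0,0)=(0,0)$, so every section of $L$ maps to itself. For the composition law, I would just expand:
\begin{align*}
\tau_{B'}(\tau_B(L)) &= \{(X,f) \oplus (\alpha,g) + i_{(X,f)}(dB, B) + i_{(X,f)}(dB', B') \mid (X,f)\oplus(\alpha,g) \in L\}.
\end{align*}
Since both $d$ and the contraction $i_{(X,f)}$ are additive in their form arguments, the sum of the two correction terms equals $i_{(X,f)}(d(B+B'),\, B+B')$, which is exactly the correction term appearing in $\tau_{B+B'}(L)$. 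Commutativity $\tau_{B'}\tau_B = \tau_B\tau_{B'}$ follows from the same symmetry, and together these give the claimed $\Omega^1(M)$-action.

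For part (ii), the map $\tau_B$ is a vector bundle isomorphism with inverse $\tau_{-B}$ (by part (i) applied with $B'=-B$). Preservation of the anchor is automatic, since the anchor of both $L$ and $\tau_B(L)$ is the projection onto $TM$ and $\tau_B$ does not alter the $(X,f)$-component. Preservation of the distinguished $1$-cocycle $(X,f)\oplus(\alpha,g)\mapsto f$ is automatic for the same reason. The only nontrivial point is compatibility with the Dorfman bracket, and here I would simply invoke the display computed in the proof of the previous lemma: that display shows
\begin{align*}
\llbracket \tau_B(X,f)\oplus(\alpha,g),\, \tau_B(Y,h)\oplus(\beta,k)\rrbracket
&= \llbracket(X,f)\oplus(\alpha,g),\,(Y,h)\oplus(\beta,k)\rrbracket + i_{[(X,f),(Y,h)]}(dB, B),
\end{align*}
which is precisely $\tau_B$ applied to the original Dorfman bracket. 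Since the anchor on $L$ is projection to $TM$, the commutator $[(X,f),(Y,h)]$ is the $TM\times\mathbb{R}$-part of the original bracket, so the formula above really does say $\tau_B\llbracket\cdot,\cdot\rrbracket_L = \llbracket\tau_B(\cdot),\tau_B(\cdot)\rrbracket_{\tau_B(L)}$.

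There is no real obstacle; the closest thing to a subtlety is to make sure one does not re-do the Cartan-calculus manipulations used in the previous lemma, but instead cites its final identity. Once that is done, parts (i) and (ii) are each a one-line check.
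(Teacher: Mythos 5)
Your proposal is correct; the paper actually omits the proof of this proposition entirely (``The proof of the following is obvious''), and your argument is exactly the intended verification: part (i) from additivity of $d$ and $i_{(X,f)}$, and part (ii) by reusing the display from the preceding lemma, where the correction term $i_{[(X,f),(Y,h)]}(dB,B)$ is precisely what $\tau_B$ adds to the original bracket while the $(X,f)$-component, and hence the anchor and the distinguished $1$-cocycle, are untouched.
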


\medskip

Next, we consider Dirac-Jacobi structures on $M$ which are graph of Jacobi structures.
Let $(\pi, E)$ be a Jacobi structure on $M$. Take any $B \in \Omega^1(M)$. Consider
the Dirac-Jacobi structure $\tau_B (L_{(\pi, E)})$ gauge equivalent to $L_{(\pi, E)}$,
$$ \tau_B (L_{(\pi, E)}) = \{ (\pi^\sharp \alpha + g E , - \langle \alpha, E \rangle) \oplus ( \alpha, g) + i_{(\pi,E)^\sharp (\alpha,g)} (d B, B)) ~|~ (\alpha, g) \in T^*M \times \mathbb{R} \} .$$
Let $\widetilde{(d B, B)} : TM \times \mathbb{R} \rightarrow T^*M \times \mathbb{R},~ (X,f) \mapsto i_{(X,f)} (d B, B)$ be the bundle map induced by $(d B, B)$. If the bundle map
\begin{align}\label{gauge-invertible}
\big( Id + \widetilde{(d B, B)} \circ (\pi, E)^\sharp \big) : T^*M \times \mathbb{R} \longrightarrow T^*M \times \mathbb{R}
\end{align}
is invertible, then $\tau_B (L_{(\pi,E)})$ is the graph of the map 
\begin{align}\label{skew-gauge-map}
(\pi,E)^\sharp \big( Id + \widetilde{(d B, B)} \circ (\pi, E)^\sharp \big)^{-1} : T^*M \times \mathbb{R} \longrightarrow TM \times \mathbb{R}.
\end{align}
In this case, the $1$-form $B$ is called $(\pi,E)$-admissible.
Moreover, the
map defined in (\ref{skew-gauge-map}) is skew-symmetric, thus, given by a pair $(\pi_B, E_B)$ of a bivector field $\pi_B$ and a vector field $E_B$ on $M$. The pair $(\pi_B, E_B)$ is completely
determined by
$$ (\pi_B, E_B)^\sharp = (\pi, E)^\sharp (Id + \widetilde{(d B, B)} \circ (\pi, E)^\sharp)^{-1},$$
and, in this case,
$$ \tau_B (L_{(\pi,E)}) = Graph ~( (\pi_B, E_B)^\sharp ) = L_{(\pi_B, E_B)}.$$
Therefore, it follows from Proposition \ref{jacobi-graph} that $(\pi_B, E_B)$ is a Jacobi structure on $M$. 
The Jacobi structure $(\pi_B, E_B)$ which is also denoted by $\tau_B (\pi,E)$, is called the gauge transformation of $(\pi, E)$ associated with the $1$-form $B$.
The Jacobi structures $(\pi,E) ,~ (\pi_B, E_B)$ are called gauge equivalent.

\begin{remark}\label{gauge-n-p-same-fol}
 Since the map (\ref{gauge-invertible}) is an isomorphism, it follows that Im $(pr_1 \circ (\pi,E)^\sharp)$ = Im $(pr_1 \circ (\pi_B, E_B)^\sharp)$. 
Therefore, gauge equivalent Jacobi structures gives rise to same characteristic distribution.
\end{remark}

More generally, gauge equivalent Jacobi structures on $M$ correspond to isomorphic Lie algebroids on $T^*M \times \mathbb{R}$.
\begin{prop}\label{gauge-iso-lie}
 Let $(\pi,E)$ be a Jacobi structure on $M$, and $(\pi_B, E_B)$ be a gauge equivalent Jacobi structure associated with the $1$-form $B$.
Then the Lie algebroid structures on $T^*M \times \mathbb{R}$ associated to $(\pi,E)$ and $ (\pi_B, E_B)$ are isomorphic.
\end{prop}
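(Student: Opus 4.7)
The plan is to piggy-back on Proposition \ref{gauge-alg-iso}(ii), which already supplies the Lie algebroid isomorphism $\tau_B : L_{(\pi,E)} \to L_{(\pi_B, E_B)}$. Since by definition the Lie algebroid structure on $T^*M \times \mathbb{R}$ coming from a Jacobi structure $(\pi, E)$ is transferred from $L_{(\pi, E)}$ via the tautological identification
$\iota_{(\pi, E)} : T^*M \times \mathbb{R} \to L_{(\pi, E)}, \ (\alpha, g) \mapsto (\pi, E)^\sharp(\alpha, g) \oplus (\alpha, g)$
(and analogously for $(\pi_B, E_B)$), one only needs to conjugate $\tau_B$ by these two identifications.

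Explicitly, I would set $\Phi := \iota_{(\pi_B, E_B)}^{-1} \circ \tau_B \circ \iota_{(\pi, E)}$ and compute it in stages: an element $(\alpha, g)$ maps under $\iota_{(\pi,E)}$ to $(\pi, E)^\sharp(\alpha, g) \oplus (\alpha, g)$, then under $\tau_B$ to $(\pi, E)^\sharp(\alpha, g) \oplus \big((\alpha, g) + i_{(\pi, E)^\sharp(\alpha, g)}(dB, B)\big)$, and finally its image under $\iota_{(\pi_B, E_B)}^{-1}$ is read off as its second factor. Therefore
\[ \Phi(\alpha, g) = \big(Id + \widetilde{(dB, B)} \circ (\pi, E)^\sharp\big)(\alpha, g), \]
which is invertible precisely by the $(\pi, E)$-admissibility of $B$ (cf.\ (\ref{gauge-invertible})). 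Being a composition of Lie algebroid isomorphisms, $\Phi$ is itself a Lie algebroid isomorphism, which is exactly what the proposition asks for.

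The main obstacle is really only bookkeeping: one must verify that $\iota_{(\pi, E)}$ and $\iota_{(\pi_B, E_B)}$ are Lie algebroid isomorphisms when $T^*M \times \mathbb{R}$ is equipped with the bracket of the remark following Proposition \ref{jacobi-graph}, but this is immediate since that bracket was defined precisely so as to make those identifications tautological. As a bonus that would be worth recording, $\Phi$ also intertwines the two distinguished $1$-cocycles $(\alpha, g) \mapsto -\langle \alpha, E \rangle$ and $(\alpha, g) \mapsto -\langle \alpha, E_B \rangle$: this follows from the second half of Proposition \ref{gauge-alg-iso}(ii), and is what underpins the subsequent remark that gauge equivalent Jacobi structures yield isomorphic Lichnerowicz-Jacobi cohomology.
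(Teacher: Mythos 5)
Your proof is correct, and it reaches exactly the same isomorphism $\Phi = Id + \widetilde{(dB,B)}\circ(\pi,E)^\sharp$ as the paper, but by a genuinely different route. The paper proves the proposition by direct computation: it writes down $\Phi$, checks $(\pi,E)^\sharp = (\pi_B,E_B)^\sharp\circ\Phi$ for the anchors, and then expands $[\Phi(\alpha,g),\Phi(\beta,k)]_{(\pi_B,E_B)}$ using the Cartan-type identities for $\widetilde{\mathcal{L}}$, $\widetilde{d}$ and $i$, showing it equals $\Phi([(\alpha,g),(\beta,k)]_{(\pi,E)})$ --- essentially replicating, at the level of $T^*M\times\mathbb{R}$, the computation already carried out in the lemma preceding Proposition \ref{gauge-alg-iso}. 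You instead observe that the Lie algebroid structure on $T^*M\times\mathbb{R}$ is \emph{defined} by transport along the tautological graph identification $\iota_{(\pi,E)}$, so that $\Phi = \iota_{(\pi_B,E_B)}^{-1}\circ\tau_B\circ\iota_{(\pi,E)}$ is automatically an isomorphism once Proposition \ref{gauge-alg-iso}(ii) is granted; the only computation left is reading off the second component, and the identity $(\pi_B,E_B)^\sharp\circ\Phi = (\pi,E)^\sharp$ guarantees that $\iota_{(\pi_B,E_B)}^{-1}$ is indeed applied to an element of the correct graph. Your approach buys conceptual economy and makes transparent why the distinguished $1$-cocycles correspond (it is inherited from Proposition \ref{gauge-alg-iso}(ii), matching the paper's remark after the proof); its only cost is that the real analytic content is pushed into Proposition \ref{gauge-alg-iso}(ii), which the paper asserts without proof but which is in substance established by the displayed computation in the lemma before it. Either way the argument is complete.
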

\begin{proof}
Consider the bundle isomorphism $\Phi := (Id ~ + ~ \widetilde{(d B, B)} \circ (\pi,E)^\sharp): T^*M \times \mathbb{R} \rightarrow T^*M \times \mathbb{R}$, given by  $(\alpha , g) \mapsto (\alpha , g) + i_{(\pi, E)^\sharp (\alpha,g)} (d B, B)$,
for $(\alpha , g) \in T^*M \times \mathbb{R}$. This map commute with the corresponding anchors, as
$$ (\pi,E)^\sharp =  (\pi_B, E_B)^\sharp \circ (Id~ + ~ \widetilde{(d B, B)} \circ (\pi,E)^\sharp) = (\pi_B, E_B)^\sharp \circ \Phi.$$
For any $(\alpha, g), (\beta, k) \in \Gamma (T^*M \times \mathbb{R})$, we also have
\begin{align*}
& [ \Phi (\alpha, g) , \Phi (\beta ,  k) ]_{(\pi_B, E_B)}\\
=&~ \widetilde{\mathcal{L}}_{ (\pi_B,E_B)^\sharp \Phi (\alpha, g)} \Phi (\beta , k ) - i_{ (\pi_B, E_B)^\sharp \Phi (\beta , k)} \widetilde{d} (\Phi (\alpha,g))\\
=&~ \widetilde{\mathcal{L}}_{ (\pi,E)^\sharp (\alpha, g)} (\beta, k) + \widetilde{\mathcal{L}}_{ (\pi,E)^\sharp (\alpha, g)} i_{(\pi,E)^\sharp (\beta,k)} (d B, B) - i_{(\pi,E)^\sharp (\beta. k)} \widetilde{d} (\alpha, g) - i_{(\pi,E)^\sharp (\beta. k)} \widetilde{d} i_{(\pi,E)^\sharp (\alpha, g)} (d B, B)\\
=&~ [ (\alpha, g) , (\beta , k) ]_{(\pi,E)} + i_{[(\pi,E)^\sharp (\alpha, g), (\pi,E)^\sharp (\beta, k) ]} (d B, B) = \Phi ([ (\alpha, g), (\beta , k ) ]_{(\pi,E)} ).
\end{align*}
Hence the proof.
\end{proof}

Note that the isomorphism $\Phi$ pulls back the cocycle $(-E_B, 0)$ to the cocycle $(-E, 0)$. This can be shown by considering the distinguised cocycles
of the corresponding Dirac-Jacobi structures.

\begin{remark}
Gauge equivalent Jacobi structures on $M$ gives rise to isomorphic Lie algebroid cohomology of $T^*M \times \mathbb{R}$. In other words, they have isomorphic Lichnerowicz-Jacobi cohomology.
\end{remark}

\begin{remark}\label{lcs-contact-remark}
 Let $(\pi, E)$ be a transitive Jacobi structure on $M$. Let $B  \in \Omega^1 (M)$ be such that the gauge transformation
$\tau_B (L_{(\pi,E)})$ defines a Jacobi structure $(\pi_B, E_B)$ on $M$. Then we have
$$ (\pi_B, E_B)^\sharp = (\pi, E)^\sharp (Id + \widetilde{(d B, B)} \circ (\pi, E)^\sharp)^{-1}.$$
Therefore,
$$ pr_1 \circ  (\pi_B, E_B)^\sharp = pr_1 \circ (\pi,E)^\sharp \circ (Id + \widetilde{(d B, B)} \circ (\pi, E)^\sharp)^{-1}.$$
Since the map $\big( Id + \widetilde{(d B, B)} \circ (\pi, E)^\sharp \big)$ is invertible and $ pr_1 \circ (\pi, E)^\sharp : T^*M \times \mathbb{R} \rightarrow TM$
is surjective, it follows that $ pr_1 \circ (\pi_B, E_B)^\sharp $ is also surjective. Therefore, gauge transformations of transitive Jacobi structures are transitive.
Thus, gauge transformations of contact structures are contact and gauge transformations of l.c.s. structures are l.c.s. In the next remark we give an alternative
argument of this fact for contact structures.
Since a gauge transformation of a Jacobi structure preserves the characteristic distribution, it only transform the contact or l.c.s. structures on the characteristic leaves by 
the pullback of the $1$-form $B$.
\end{remark}

\begin{remark}\label{contact-remark-1}
 Let $\eta$ be a contact $1$-form on $M$ with associated Jacobi structure $ (\pi, E)$. Let $(\pi_B, E_B)$ be its gauge transformation associated with the $(\pi,E)$-admissible $1$-form
$B$. As we have
$$ (\pi_B, E_B)^\sharp = (\pi,E)^\sharp  (Id + \widetilde{(d B, B)} \circ (\pi, E)^\sharp)^{-1},$$
the map $(\pi_B, E_B)^\sharp$ is also invertible and the inverse is given by
\begin{align}\label{gauge-contact-inverse}
( (\pi_E, E_B)^\sharp )^{-1} = ((\pi,E)^\sharp)^{-1} + \widetilde{(d B, B)}.
\end{align}
Therefore, the gauge transformation of contact structures are contact.

If $\eta_B$ denotes the gauge transformation of $\eta$ associated with $B$, then it follows from (\ref{contact-inverse}) and (\ref{gauge-contact-inverse}) that
$$ (\eta_B, 0) = ((\pi_B, E_B)^\sharp )^{-1} (0, -1) = ((\pi,E)^\sharp )^{-1} (0, -1) + ~ i_{(0, -1)} {(d B,B)} = (\eta, 0) ~+~ (- B , 0) = (\eta - B , 0).$$
Therefore, we have $\eta_B = \eta - B.$
\end{remark}

\begin{remark}\label{contact-remark-2}
Let $\eta$ and $\eta'$ be any two contact structures on $M$ with associated Jacobi structures $ (\pi, E)$ and $ (\pi', E')$, respectively. 
Define a map 
$\theta : TM \times \mathbb{R} \rightarrow T^*M \times \mathbb{R}$ by
$$ ((\pi', E')^\sharp)^{-1} = ((\pi,E)^\sharp)^{-1} + \theta.$$
It follows from Equation (\ref{contact-inverse}) that the bundle map $((\pi,E)^\sharp)^{-1} : TM \times \mathbb{R} \rightarrow T^*M \times \mathbb{R}$ is given by
$\widetilde{(-d \eta, - \eta)}$. Similarly, for the bundle map $ ((\pi', E')^\sharp)^{-1}$. This shows that the map
$\theta :  TM \times \mathbb{R} \rightarrow T^*M \times \mathbb{R}$ is skew-symmetric and is given by
$\widetilde{(d B, B)}$, where $B = \eta - \eta'.$
It also follows from the construction that the map
$\big( Id + \widetilde{(d B,B)} \circ (\pi,E)^\sharp \big)$ is invertible. This shows that any two contact structures on $M$ are gauge equivalent.
\end{remark}

Next, we discuss the effect of gauge transformations on l.c.s. structures. Let $(\omega, \theta)$ be a l.c.s. structure on $M$ with associated Jacobi structure
$(\pi,E)$. See Example \ref{exam-lcs} for details. The corresponding Dirac-Jacobi structure is given by
\begin{align}\label{lcs-eqn}
 L_{(\pi,E)} =&~ \{ (\pi^\sharp \alpha + g E, - \langle \alpha, E \rangle) \oplus (\alpha, g) ~|~ ( \alpha , g ) \in T^*M \times \mathbb{R} \} \nonumber \\ 
=&~ \{ (- \flat^{-1} (\alpha ) + g~ \flat^{-1} (\theta), - \langle \alpha, \flat^{-1}(\theta) \rangle~) \oplus (\alpha, g) ~|~ (\alpha, g) \in T^*M \times \mathbb{R} \} \nonumber \\
=&~ \{ ( - \flat^{-1} (\alpha - g \theta), ~ - \langle \alpha, \flat^{-1} (\theta) \rangle ~) \oplus (\alpha , g)  ~|~ (\alpha, g) \in T^*M \times \mathbb{R} \}\nonumber \\
=&~ \{ ( - \flat^{-1} (\alpha), - \langle \alpha + g \theta , \flat^{-1} (\theta) \rangle ~) \oplus (\alpha + g \theta, g)  ~|~ (\alpha, g) \in T^*M \times \mathbb{R} \}\nonumber \\
=&~ \{ ( - X,  \theta (X) ) \oplus (i_X \omega + g \theta, g) ~|~ (X, g) \in TM \times \mathbb{R} \}. 
\end{align}
For any $1$-form $B \in \Omega^1(M)$, we have from Equation (\ref{lcs-eqn}) that
\begin{align}\label{lcs-eqn-2}
&\big(  Id + \widetilde{(d B, B)} \circ (\pi,E)^\sharp \big) \big(i_X \omega  - B(X) \theta,~ - B(X) \big)  \nonumber \\
=&~ (i_X \omega  - B(X) \theta,~ - B(X))  + i_{( - X,  \theta (X) )} (d B, B) \nonumber \\
=&~ (i_X \omega - B(X) \theta,~ - B(X)) + (- i_X d B + \theta(X) B,~ B (X)) \nonumber \\
=&~ (i_X \omega - i_X d B - B(X) \theta + \theta (X) B  ,~ 0 ) \nonumber \\
=&~ (i_X (\omega - d B - B \wedge \theta) ,~ 0).
\end{align}
If $B$ is $(\pi,E)$-admissible, that is, the map $( Id + \widetilde{(d B, B)} \circ (\pi,E)^\sharp )$ is invertible,
then it follows from (\ref{lcs-eqn-2}) that the two form $(\omega - d B - B \wedge \theta)$ is non-degenerate. In this case, the pair $(\omega - d B - B \wedge \theta, \theta)$ defines
a l.c.s. structure on $M$. Moreover,
\begin{align*}
 \tau_B (L_{(\pi,E)}) =&~ \{  ( - X,  \theta (X) ) \oplus (i_X \omega + g \theta, g) + i_{( - X,  \theta (X) )} (d B , B) ~|~ (X, g) \in TM \times \mathbb{R} \}\\
=&~ \{  ( - X,  \theta (X) ) \oplus  (i_X \omega + g \theta - i_X d B + \theta (X) B ,~ g + B (X) ) ~|~  (X, g) \in TM \times \mathbb{R} \}\\
=&~  \{  ( - X,  \theta (X) ) \oplus \big(i_X (\omega - d B - B \wedge \theta) + g \theta ,~ g  \big) ~|~  (X, g) \in TM \times \mathbb{R} \}
\end{align*}
is the Dirac-Jacobi structure induced from the l.c.s. structure $(\omega - d B - B \wedge \theta, \theta)$. Hence, the gauge transformation of the l.c.s.
structure $(\omega, \theta)$ is given by $(\omega - d B - B \wedge \theta, \theta)$.

\medskip

\section{Gauge transformations commute with the Poissonization}\label{sec-poissonization}
In this section, we prove that gauge transformations of Jacobi structures on a manifold commute with the Poissonization process.
We first observe that gauge transformations of Dirac-Jacobi structures commute with the Diracization.

Let $L \subset (TM \times \mathbb{R}) \oplus (T^*M \times \mathbb{R})$ be a Dirac-Jacobi structure on $M$. Take $B \in \Omega^1(M).$
Then we have the following.

\begin{prop}\label{gauge-commute-diracization}
 Gauge transformations of Dirac-Jacobi structures commute with the Diracization process. In other words, the following diagram commute
\begin{align*}
\xymatrixrowsep{0.5in}
\xymatrixcolsep{0.7in}
\xymatrix{
(M, L) \ar[r]^{\tau_B} \ar[d]_{Dirac} & (M, \tau_B (L)) \ar[d]^{Dirac} \\
(M \times \mathbb{R} , \widetilde{L}) \ar[r]_{\tau_{\widetilde{B}}} & (M \times \mathbb{R}, \tau_{\widetilde{B}} (\widetilde{L}) = \widetilde{\tau_B (L)}))
}
\end{align*}
where $\widetilde{B} = e^t (pr_1^* d B + d t \wedge pr_1^* B)$ is a closed $2$-form on $M \times \mathbb{R}$.
\end{prop}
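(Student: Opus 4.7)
The plan is a direct bookkeeping calculation: parametrize a generic element of each side by the underlying element $(X,f)\oplus(\alpha,g)\in L$, write out what Diracization and gauge transformation do to it, and match the resulting formulas.

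First I would verify that $\widetilde{B}=e^{t}(pr_1^{*}dB+dt\wedge pr_1^{*}B)$ is indeed a closed $2$-form on $M\times\mathbb{R}$, so that the gauge transformation $\tau_{\widetilde{B}}$ of the Dirac structure $\widetilde{L}$ is well-defined. The computation $d\widetilde{B}=e^{t}dt\wedge(pr_1^{*}dB+dt\wedge pr_1^{*}B)-e^{t}dt\wedge pr_1^{*}dB$ collapses to $0$ using $dt\wedge dt=0$. This is a routine check, but needs to be recorded since the notion of gauge transformation on the Dirac side requires closedness.

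Next I would compute the two sides of the square on a generic element. On the upper branch, $\tau_{B}(L)$ consists of elements $(X,f)\oplus(\alpha+i_{X}dB+fB,\,g-\langle X,B\rangle)$ with $(X,f)\oplus(\alpha,g)\in L$; Diracizing then yields
\begin{align*}
(X+f\tfrac{\partial}{\partial t})\oplus e^{t}\bigl((\alpha+i_{X}dB+fB)+(g-\langle X,B\rangle)dt\bigr).
\end{align*}
On the lower branch, I would contract the vector field $X+f\partial_{t}$ on $M\times\mathbb{R}$ against $\widetilde{B}$. Using $i_{\partial_{t}}pr_1^{*}dB=0$ and $i_{X+f\partial_t}(dt\wedge pr_1^{*}B)=f\,pr_1^{*}B-\langle X,B\rangle\,dt$, one gets
\begin{align*}
i_{X+f\partial_t}\widetilde{B}=e^{t}\,pr_1^{*}(i_{X}dB+fB)-e^{t}\langle X,B\rangle\,dt.
\end{align*}
Adding this to $e^{t}(\alpha+g\,dt)$ produces exactly the same expression as above. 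This proves elementwise that $\tau_{\widetilde{B}}(\widetilde{L})=\widetilde{\tau_{B}(L)}$.

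I do not expect a serious obstacle here; the content is a careful matching of pullback and contraction terms. The one point to be deliberate about is the implicit $pr_1^{*}$ in the Diracization formula (both for the tangent and cotangent components), which is what allows the $e^{t}$ factor and the $dt$ piece to combine correctly with the contraction of $\widetilde{B}$. Once the two parameterized families agree, the commutativity of the diagram follows immediately, and the particular case $L=L_{(\pi,E)}$ (used later in Proposition \ref{gauge-commute-poissonization}) reduces to the same calculation read through Remark \ref{dj-dirac-p}.
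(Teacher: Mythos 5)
Your proposal is correct and follows essentially the same route as the paper's proof: both sides of the square are parametrized by $(X,f)\oplus(\alpha,g)\in L$ and the resulting formulas are matched term by term, with the contraction $i_{X+f\partial_t}\widetilde{B}$ producing exactly the Diracization of $i_{(X,f)}(dB,B)$. Your additional check that $\widetilde{B}$ is closed is a harmless (and correct) supplement that the paper leaves implicit.
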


\begin{proof}
We have
\begin{align*}
 \tau_B (L) =~& \{   (X,f) \oplus ((\alpha, g) + i_{(X,f)} (d B, B) ) ~|~ (X,f) \oplus (\alpha, g) \in L \} \\
=~& \{   (X,f) \oplus ((\alpha, g) + (i_X d B + f B , - \langle B, X \rangle) ) ~ |~ (X,f) \oplus (\alpha, g) \in L  \} \\
=~& \{   (X,f) \oplus ( \alpha +  i_X d B + f B , g - \langle B , X \rangle    ) ~ |~ (X,f) \oplus (\alpha, g) \in L  \}.
\end{align*}
Hence,
$$ \widetilde{\tau_B (L)} = \{  (X + f \frac{\partial}{\partial t} ) \oplus e^t( \alpha +  i_X d B + f B + g dt - \langle B, X \rangle dt)  ~ |~ (X,f) \oplus (\alpha, g) \in L    \}.$$
On the other hand, from the definition of $\widetilde{L}$ it follows that
\begin{align*}
 \tau_{\widetilde{B}} (\widetilde{L}) =~& \{    (X + f \frac{\partial}{\partial t} ) \oplus e^t (\alpha + g dt) + i_{X + f \frac{\partial}{\partial t}} e^t (pr_1^* d B + d t \wedge pr_1^* B)  ~|~ (X,f) \oplus (\alpha, g) \in L \} \\
=~& \{ (X + f \frac{\partial}{\partial t} ) \oplus e^t( \alpha + g dt + i_X d B  - \langle B , X \rangle dt + f B )  ~ |~ (X,f) \oplus (\alpha, g) \in L    \} \\
=~& \widetilde{\tau_B (L)}.
\end{align*}
\end{proof}

To prove that gauge transformations of Jacobi structures commute with the Poissonization, we need the following lemma.

\begin{lemma}\label{adm-adm}
 Let $(\pi,E)$ be a Jacobi structure on $M$ and $B \in \Omega^1(M)$ be a $1$-form. Then $B$ is $(\pi,E)$-admissible if and only if $\widetilde{B}$ is $\widetilde{(\pi,E)}$-admissible.
\end{lemma}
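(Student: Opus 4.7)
The cleanest route is to avoid writing out the two invertibility conditions on $T^*M\times\mathbb{R}$ and $T^*(M\times\mathbb{R})$ directly, and instead reduce the statement to the fact (established in Proposition \ref{gauge-commute-diracization}) that gauge transformations commute with Diracization, combined with the characterization of which Dirac(-Jacobi) structures arise as graphs. The idea is that $(\pi,E)$-admissibility of $B$ is, by construction in Section \ref{sec-3}, equivalent to the statement that the gauge-transformed Dirac-Jacobi structure $\tau_B(L_{(\pi,E)})$ is itself a graph of a Jacobi structure, and analogously on the Poisson side.

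The plan is as follows. First I would recall that $\widetilde{L_{(\pi,E)}}=L_{\widetilde{(\pi,E)}}$, so that Proposition \ref{gauge-commute-diracization} specializes to
\begin{equation*}
\widetilde{\tau_B(L_{(\pi,E)})} \;=\; \tau_{\widetilde{B}}(L_{\widetilde{(\pi,E)}}).
\end{equation*}
Next I would invoke Remark \ref{dj-dirac-p}, which says that a Dirac-Jacobi structure $K$ on $M$ is of the form $L_{(\pi',E')}$ for some bivector field $\pi'$ and vector field $E'$ if and only if its Diracization $\widetilde{K}$ is the graph of a Poisson bivector of the particular form $e^{-t}(\pi'+\tfrac{\partial}{\partial t}\wedge E')$, which is exactly the Poissonization of $(\pi',E')$. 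Applying this to $K=\tau_B(L_{(\pi,E)})$ converts the graph condition on $\tau_B(L_{(\pi,E)})$ into the graph condition on $\tau_{\widetilde{B}}(L_{\widetilde{(\pi,E)}})$.

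Finally, I would close the equivalence by unwinding the definition of admissibility at both ends. By construction, $B$ is $(\pi,E)$-admissible (i.e.\ the map $\mathrm{Id}+\widetilde{(dB,B)}\circ(\pi,E)^\sharp$ on $T^*M\times\mathbb{R}$ is invertible) precisely when $\tau_B(L_{(\pi,E)})$ is realized as the graph of some $(\pi_B,E_B)^\sharp$, hence is of the form $L_{(\pi_B,E_B)}$; and $\widetilde{B}$ is $\widetilde{(\pi,E)}$-admissible precisely when $\tau_{\widetilde{B}}(L_{\widetilde{(\pi,E)}})$ is the graph of a Poisson bivector. Chaining the two observations, $B$ is $(\pi,E)$-admissible iff $\tau_B(L_{(\pi,E)})$ is a graph iff its Diracization $\tau_{\widetilde{B}}(L_{\widetilde{(\pi,E)}})$ is a graph iff $\widetilde{B}$ is $\widetilde{(\pi,E)}$-admissible.

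The only place requiring care is the last step: one must check that whenever $\tau_{\widetilde{B}}(L_{\widetilde{(\pi,E)}})$ is the graph of a Poisson bivector on $M\times\mathbb{R}$, that bivector automatically has the specific $e^{-t}$-form required by Remark \ref{dj-dirac-p}; otherwise the Diracization inverse would not land back inside the Jacobi framework. I expect this to be essentially forced by the explicit formula for $\widetilde{B}=e^t(pr_1^* dB + dt\wedge pr_1^* B)$ together with the fact that $\tau_{\widetilde{B}}$ preserves the defining description of $L_{\widetilde{(\pi,E)}}$, but it is the step that needs honest verification rather than appeal to the commutative diagram alone.
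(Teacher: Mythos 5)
Your proposal is correct and follows essentially the same route as the paper: both directions are obtained by combining Proposition \ref{gauge-commute-diracization} (giving $\tau_{\widetilde{B}}(L_{\widetilde{(\pi,E)}})=\widetilde{\tau_B(L_{(\pi,E)})}$) with Remark \ref{dj-dirac-p} and the graph characterization of admissibility. The verification you flag at the end — that a Diracization which is the graph of a Poisson bivector necessarily has the $e^{-t}(\pi'+\frac{\partial}{\partial t}\wedge E')$ form — is exactly the content of Remark \ref{dj-dirac-p}, which the paper invokes at the corresponding point.
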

\begin{proof}
 For any $1$-form $B$, we have from Proposition \ref{gauge-commute-diracization} that
\begin{align}\label{adm-adm}
 \tau_{\widetilde{B}} (L_{ \widetilde{(\pi,E)}  })=  \tau_{\widetilde{B}} ( \widetilde{L_{(\pi,E)}} ) = \widetilde{ \tau_B (L_{(\pi,E)})}.
\end{align}
If $B$ is $(\pi, E)$-admissible, then $\widetilde{ \tau_B (L_{(\pi,E)})}  = \widetilde{ L_{\tau_B (\pi,E)  }  } = L_{\widetilde{ \tau_B (\pi,E)   }}.$ Hence, we have
$$ \tau_{\widetilde{B}} (L_{ \widetilde{(\pi,E)}  }) =   L_{\widetilde{ \tau_B (\pi,E)   }}.$$
Since $\widetilde{(\pi,E)}$ and $\widetilde{ \tau_B (\pi,E)   }$ are both Poisson structures on $M \times \mathbb{R}$, the above equality holds only when $\widetilde{B}$
is $\widetilde{(\pi,E)}$-admissible and $\tau_{\widetilde{B}} \widetilde{(\pi,E)} =  \widetilde{ \tau_B (\pi,E)   }$.

Conversely, if $\widetilde{B}$ is $\widetilde{(\pi,E)}$-admissible, then $\tau_{\widetilde{B}} (L_{\widetilde{(\pi,E)}} )= L_{\tau_{\widetilde{B}} \widetilde{(\pi,E)}}$. Therefore, we have from (\ref{adm-adm}) that
$$ L_{\tau_{\widetilde{B}} \widetilde{(\pi,E)}} = \widetilde{\tau_B (L_{(\pi,E)})}.$$
Since $\tau_{\widetilde{B}} \widetilde{(\pi,E)}$ defines a Poisson structure on $M \times \mathbb{R}$, it follows from  Remark \ref{dj-dirac-p} that the Dirac-Jacobi structure $\tau_B (L_{(\pi,E)})$ is given by a Jacobi structure. In other words, $B$ is $(\pi,E)$-admissible.
\end{proof}

By Proposition \ref{gauge-commute-diracization} and Lemma \ref{adm-adm} we have the following.

\begin{prop}\label{gauge-commute-poissonization}
Let $(M, \pi, E)$ be a Jacobi manifold 
and $B \in \Omega^1(M)$ be a $(\pi,E)$-admissible $1$-form on $M$. 
Then $\tau_{\widetilde{B}} \widetilde{(\pi,E)} = \widetilde{\tau_B (\pi, E)}$, where $\widetilde{(\pi,E)}$ and $\widetilde{\tau_B (\pi,E)}$ denote the Poissonization of the Jacobi structures $(\pi,E)$ and $\tau_B (\pi,E)$, respectively.

\begin{align*}
\xymatrixrowsep{0.5in}
\xymatrixcolsep{0.7in}
\xymatrix{
(M, \pi,E) \ar[r]^{\tau_B} \ar[d]_{Pois} & (M, \tau_B (\pi,E)) \ar[d]^{Pois} \\
(M \times \mathbb{R} , \widetilde{(\pi,E)}) \ar[r]_{\tau_{\widetilde{B}}} & (M \times \mathbb{R}, \tau_{\widetilde{B}} \widetilde{(\pi,E)} = \widetilde{\tau_B (\pi,E)})
}
\end{align*}
Therefore, gauge transformations of Jacobi structures commute with the Poissonization.
\end{prop}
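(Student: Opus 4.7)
The plan is to assemble this proposition as an immediate corollary of Proposition \ref{gauge-commute-diracization} (gauge commutes with Diracization for Dirac--Jacobi structures) and Lemma \ref{adm-adm} (admissibility is preserved under Poissonization). Essentially the content is already implicit in the proof of Lemma \ref{adm-adm}; what remains is to chase the identifications between Jacobi/Poisson structures and the graphs of their associated Dirac--Jacobi/Dirac structures.

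First I would specialize Proposition \ref{gauge-commute-diracization} to the Dirac--Jacobi structure $L_{(\pi,E)}$ coming from the Jacobi structure $(\pi,E)$, obtaining the equality
\[
\tau_{\widetilde{B}}\bigl(\widetilde{L_{(\pi,E)}}\bigr) \;=\; \widetilde{\tau_B(L_{(\pi,E)})}.
\]
Using the identification $\widetilde{L_{(\pi,E)}} = L_{\widetilde{(\pi,E)}}$ recorded in the remark following Proposition \ref{jacobi-graph}, the left-hand side becomes $\tau_{\widetilde{B}}(L_{\widetilde{(\pi,E)}})$. Since $(\pi,E)$ is Jacobi, $\widetilde{(\pi,E)}$ is Poisson on $M\times\mathbb{R}$, and by Lemma \ref{adm-adm} the $1$-form $\widetilde{B}$ is $\widetilde{(\pi,E)}$-admissible, so this further equals $L_{\tau_{\widetilde{B}}\widetilde{(\pi,E)}}$, the graph of a Poisson structure on $M\times\mathbb{R}$.

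Next I would rewrite the right-hand side. Since $B$ is $(\pi,E)$-admissible by hypothesis, $\tau_B(L_{(\pi,E)})=L_{\tau_B(\pi,E)}$, and applying Diracization together with the same identification gives
\[
\widetilde{\tau_B(L_{(\pi,E)})} \;=\; \widetilde{L_{\tau_B(\pi,E)}} \;=\; L_{\widetilde{\tau_B(\pi,E)}}.
\]
Combining the two computations yields $L_{\tau_{\widetilde{B}}\widetilde{(\pi,E)}} = L_{\widetilde{\tau_B(\pi,E)}}$, and since a Poisson bivector is determined by the graph of its anchor map, the conclusion $\tau_{\widetilde{B}}\widetilde{(\pi,E)} = \widetilde{\tau_B(\pi,E)}$ follows.

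There is no real obstacle here; the only point requiring care is verifying that both sides are genuinely graphs of Poisson structures (so that equality of graphs forces equality of the tensors), which is precisely the content of the admissibility statements supplied by Lemma \ref{adm-adm}. The substantive computations have already been absorbed into Proposition \ref{gauge-commute-diracization}, so the proof reduces to a short diagrammatic assembly.
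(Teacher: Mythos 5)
Your proposal is correct and follows exactly the paper's route: the paper derives this proposition directly from Proposition \ref{gauge-commute-diracization} and Lemma \ref{adm-adm}, and indeed the identity $\tau_{\widetilde{B}}\widetilde{(\pi,E)}=\widetilde{\tau_B(\pi,E)}$ is already extracted inside the proof of Lemma \ref{adm-adm} by the same graph-chasing you describe. Your write-up just makes the assembly explicit; nothing is missing.
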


\begin{remark}
In the particular case of contact structure, the proof is more simple. This follows from the  fact that the Poissonization of a contact structure $\eta$ on $M$
is given by a symplectic structure $\widetilde{\eta} = e^t (pr_1^* d \eta + d t \wedge  pr_1^* \eta)$ on $M \times \mathbb{R}$.

Hence, in this case,
\begin{align*}
 (\tau_{\widetilde{B}} \circ Pois )(\eta) =~&  \tau_{\widetilde{B}} \big(   e^t (pr_1^* d \eta + d t \wedge  pr_1^* \eta)    \big) \\
=~& e^t (pr_1^* d \eta + d t \wedge  pr_1^* \eta) - \widetilde{B} \\
=~& e^t (pr_1^* d \eta + d t \wedge  pr_1^* \eta - pr_1^* d B - d t \wedge pr_1^* B ) \\
=~& e^t \big( pr_1^* d (\eta - B) + d t \wedge pr_1^* (\eta - B) \big)\\
=~& Pois ~(\eta - B) = (Pois \circ \tau_B )(\eta).
\end{align*}
\end{remark}

\medskip

\section{Gauge transformations and contact groupoids}\label{sec-contact-groupoid}
The notion of contact groupoid was first introduced in \cite{ker-sou}. See \cite{cra-zhu, cra-sal} for recent developments on contact groupoids and integrable Jacobi structures.
In this section, we describe how the contact structure of a contact groupoid is effected by a gauge transformation of the Jacobi structure on its base.
\begin{defn}
 A contact groupoid is a Lie groupoid $G \rightrightarrows M$ together with a contact
$1$-form $\eta \in \Omega^1(G)$ and a function $\sigma \in C^\infty (G)$ such that
$$\eta_{gh} (X_g \oplus_{TG} Y_h) = \eta_g (X_g) + e^{\sigma (g)} \eta_h (Y_h), ~ \text{ for } (X_g, Y_h) \in (TG)^{(2)},$$
where $\oplus_{TG}$ denotes the groupoid (partial) multiplication on the tangent Lie groupoid $TG \rightrightarrows TM$.
\end{defn}
It follows from the above condition that $\sigma$ is a multiplicative function on $G$, that is, $\sigma (gh) = \sigma (g) + \sigma (h),$ for all $(g,h) \in G^{(2)}.$
Given a contact groupoid $(G \rightrightarrows M, \eta, \sigma)$, the manifold $M$ carries a unique Jacobi structure such that $(\alpha, e^\sigma)$ is a conformal
Jacobi map and $\beta$ is an anti-Jacobi map (see \cite{cra-zhu, igl-marr2} for more details). 
In this case, the contact groupoid $(G \rightrightarrows M, \eta, \sigma)$ is said to integrate the base Jacobi structure.
Alternatively, a Jacobi structure $(\pi,E)$ on $M$ is integrable if and only if the corresponding Lie algebroid structure on $T^*M \times \mathbb{R} \rightarrow M$ is integrable \cite{cra-zhu}.
In this case, the source-connected, simply connected Lie groupoid integrating $T^*M \times \mathbb{R} \rightarrow M$ carries a unique contact form and a multiplicative
function which makes it a contact groupoid. Under this correspondence, the multiplicative function on the groupoid differentiates to the distinguised $1$-cocycle $(-E,0)$ of the Lie algebroid.

Let $(M, \pi, E)$ be an integrable Jacobi manifold, with contact groupoid $(G \rightrightarrows M, \eta, \sigma)$. 
Let $B \in \Omega^1(M)$ be a $(\pi,E)$-admissible $1$-form on $M$. Since the Jacobi structures $(\pi,E)$ and $\tau_B (\pi,E)$
corresponds to isomorphic Lie algebroid structures on $T^*M \times \mathbb{R}$ and also the distinguised $1$-cocycles are same,
the Jacobi structure $\tau_B(\pi,E)$ and the corresponding distinguised $1$-cocycle can be integrated by a Lie groupoid isomorphic to $G \rightrightarrows M$ and by the multiplicative function $\sigma$ (Proposition \ref{gauge-iso-lie}).
Here we discuss the effect of the gauge transformation $\tau_B$ on the base Jacobi manifold to the contact $1$-form
of the Lie groupoid $G$ (Theorem \ref{final-thm}).

We first observe that if 
$\eta$ is a contact $1$-form with corresponding Jacobi structure $(\pi_\eta, E_\eta)$, then the corresponding Dirac-Jacobi structures are related by $L_{(\pi_\eta, E_\eta)} = (L_\eta)_{-} = L_{- \eta}.$

We recall the following definition from \cite{igl-wade}.

\begin{defn}
Let $\widetilde{M}$ (resp. $M$) be a smooth manifold and
 $L_{\widetilde{M}}$ (resp. $L_M$) a Dirac-Jacobi structure on $\widetilde{M}$ (resp. $M$). A smooth map $\phi : \widetilde{M} \rightarrow M$ is said to be a (forward) Dirac-Jacobi map
if $L_M = \phi_* (L_{\widetilde{M}})$, where
$$ \phi_* (L_{\widetilde{M}}) = \big\{  ( \phi_* \widetilde{X}, f) \oplus (\alpha, g)    |~  (\widetilde{X}, f \circ \phi) \oplus (\phi^* \alpha, g \circ \phi ) \in L_{\widetilde{M}} \big\}.$$
The map $\phi : \widetilde{M} \rightarrow M$ is called anti-Dirac-Jacobi map if $\phi_* (L_{\widetilde{M}}) = (L_M)_{-}.$ In any case,
 if $(\widetilde{X}, f \circ \phi) \in ker ~{L_{\widetilde{M}}}$ then $( \phi_* (\widetilde{X}), f) \in ker {L_M}.$
\end{defn}

We show that a smooth Jacobi map $\phi : \widetilde{M} \rightarrow M$ between two Jacobi manifolds is same as forward Dirac-Jacobi map when the manifolds are equipped with
corresponding Dirac-Jacobi structures. This follows from the following observation.

Let $\widetilde{M}$ (resp. $M$) be a Jacobi manifold with Jacobi structure $(\widetilde{\pi}, \widetilde{E})$ (resp. $(\pi, E)$) and $\phi : \widetilde{M} \rightarrow M$
be a Jacobi map. Therefore, we have
$$\pi^\sharp = \phi_* \circ \widetilde{\pi}^\sharp \circ \phi^*    ~~~ \text{ and } ~~~ \phi_* \widetilde{E} = E.$$
This implies that
\begin{align*}
 L_{(\pi, E)} =~& \big\{ (\pi^\sharp \alpha + g E, - \langle \alpha, E \rangle) \oplus (\alpha, g) ~|~ (\alpha, g) \in T^*M \times \mathbb{R} \big\} \\
=~& \big\{ (\phi_* \circ \widetilde{\pi}^\sharp \circ \phi^* (\alpha)  + g \phi_* \widetilde{E} ~,~ - \langle \alpha, \phi_* \widetilde{E} \rangle ) \oplus (\alpha, g) ~ |~ (\alpha, g) \in T^*M \times \mathbb{R} \big\} \\
=~& \big\{ (\phi_* \circ \widetilde{\pi}^\sharp \circ \phi^* (\alpha)  + \phi_* ((g \circ \phi)\widetilde{E}) ~,~ - \langle \alpha, \phi_* \widetilde{E} \rangle ) \oplus (\alpha, g) ~ |~ (\alpha, g) \in T^*M \times \mathbb{R} \big\}.
\end{align*}
Let $\widetilde{X} = \widetilde{\pi}^\sharp ( \phi^*\alpha) + (g \circ \phi)\widetilde{E}$ and $f = - \langle \alpha, \phi_* \widetilde{E} \rangle $. Then 
$f \circ \phi =  - \langle \alpha, \phi_* \widetilde{E} \rangle \circ \phi = - \langle \phi^* \alpha, \widetilde{E} \rangle$ and moreover,
$$(\widetilde{X}, f \circ \phi) \oplus (\phi^*\alpha, g \circ \phi) 
= ( \widetilde{\pi}^\sharp ( \phi^*\alpha) + (g \circ \phi)\widetilde{E} ~,~   - \langle \phi^* \alpha, \widetilde{E} \rangle ) \oplus ( \phi^*\alpha, g \circ \phi  ) \in L_{  (\widetilde{\pi}, \widetilde{E})}.$$
Therefore,
$$ L_{(\pi, E)} = \big\{ (\phi_* \widetilde{X}, f ) \oplus (\alpha , g ) ~|~ (\alpha, g) \in T^*M \times \mathbb{R}  \text{ and } (\widetilde{X}, f \circ \phi) \oplus (\phi^*\alpha, g \circ \phi) \in  L_{  (\widetilde{\pi}, \widetilde{E})} \big\} = \phi_* (L_{ (\widetilde{\pi}, \widetilde{E})}).$$
This shows that $\phi$ is a forward Dirac-Jacobi map. It is also easy to verify that if $\phi$ is a Dirac-Jacobi map then $\phi$ is a Jacobi map.

\medskip
The next lemma shows the relation between gauge transformations and push-forward of Dirac-Jacobi structures.
\begin{lemma}\label{gauge-pullback-inv}
 Let $\phi : \widetilde{M} \rightarrow M$ be a smooth map and $L_{\widetilde{M}}$ be a Dirac-Jacobi structure on $\widetilde{M}$. Then for any
$B \in \Omega^1(M)$,
$$\phi_* \big( \tau_{\phi^*B}  L_{\widetilde{M}}  \big)   = \tau_B   \big( \phi_* (L_{\widetilde{M}})  \big).$$
\end{lemma}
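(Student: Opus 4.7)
The strategy is a direct unwinding of the definitions of $\tau$ and $\phi_*$ on both sides, reducing the equality to a sequence of naturality identities between pullbacks, pushforwards, and the contraction/differential operations involved in the definition of $\tau$. I will spell out the three naturality identities I need and then check that, once rewritten with their aid, the set of pairs appearing on either side of the claimed equation coincides.

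\textbf{Step 1: Isolate the naturality identities.} For any $k$-form $\omega \in \Omega^k(M)$ and any vector field $\widetilde{X}$ on $\widetilde{M}$, the three facts I will repeatedly use are
\begin{align*}
d(\phi^* \omega) &= \phi^*(d\omega), \\
i_{\widetilde{X}} \phi^*\omega &= \phi^*(i_{\phi_* \widetilde{X}} \omega), \\
(f \circ \phi)\, \phi^* \omega &= \phi^*(f \omega),
\end{align*}
where the middle identity is verified pointwise from $(\phi^*\omega)_p(\widetilde{X}_p, \cdots) = \omega_{\phi(p)}(\phi_{*,p}\widetilde{X}_p, \cdots)$. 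Applied to $B \in \Omega^1(M)$ and a function $f \in C^\infty(M)$, these give $i_{\widetilde{X}}\, d(\phi^*B) = \phi^*(i_{\phi_*\widetilde{X}} dB)$ and $(f\circ\phi)\phi^*B = \phi^*(fB)$ and $\langle \widetilde{X}, \phi^*B\rangle = \langle \phi_*\widetilde{X}, B\rangle \circ \phi$.

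\textbf{Step 2: Rewrite an element of $\phi_*(\tau_{\phi^*B} L_{\widetilde{M}})$.} By definition, $(\phi_*\widetilde{X}, f) \oplus (\alpha, g) \in \phi_*(\tau_{\phi^*B} L_{\widetilde{M}})$ iff $(\widetilde{X}, f\circ\phi)\oplus(\phi^*\alpha, g\circ\phi) \in \tau_{\phi^*B} L_{\widetilde{M}}$, i.e. iff there exists $(\widetilde{X}, f\circ\phi) \oplus (\widetilde{\alpha}', \widetilde{g}') \in L_{\widetilde{M}}$ with
$$\widetilde{\alpha}' + i_{\widetilde{X}} d(\phi^*B) + (f\circ\phi)\phi^*B = \phi^*\alpha, \qquad \widetilde{g}' - \langle \widetilde{X}, \phi^*B\rangle = g\circ\phi.$$
Applying the identities of Step 1, this becomes $\widetilde{\alpha}' = \phi^*(\alpha - i_{\phi_*\widetilde{X}} dB - fB)$ and $\widetilde{g}' = (g + \langle \phi_*\widetilde{X}, B\rangle)\circ\phi$.

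\textbf{Step 3: Recognize the result as an element of $\tau_B(\phi_*L_{\widetilde{M}})$.} Setting $Y = \phi_*\widetilde{X}$, $\beta = \alpha - i_Y dB - fB$, $k = g + \langle Y, B\rangle$, Step~2 says exactly that $(Y,f)\oplus(\beta,k) \in \phi_*(L_{\widetilde{M}})$; and then by definition of $\tau_B$,
$$(Y, f) \oplus (\beta + i_Y dB + fB,\; k - \langle Y, B\rangle) = (Y,f) \oplus (\alpha, g) \in \tau_B(\phi_*L_{\widetilde{M}}).$$
The computation is reversible at each stage, giving the opposite inclusion, which yields the claimed equality.

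\textbf{Main obstacle.} The only conceptual subtlety is the middle naturality identity $i_{\widetilde{X}}\phi^*\omega = \phi^*(i_{\phi_*\widetilde{X}}\omega)$: unlike the situation for pushforwards of Poisson bivectors, here $\phi$ need not be a diffeomorphism and $\phi_*\widetilde{X}$ is not globally a vector field on $M$, so this identity must be read pointwise. Once one accepts the pointwise interpretation (and the analogous reading of $\phi_*(L_{\widetilde{M}})$ from the definition in the paper), the rest of the argument is a symbolic rearrangement.
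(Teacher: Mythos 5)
Your proof is correct and follows essentially the same route as the paper's: both sides are unwound from the definitions of $\tau$ and $\phi_*$, and the equality is reduced to the three naturality identities $i_{\widetilde{X}}\, d(\phi^*B) = \phi^*(i_{\phi_*\widetilde{X}}\, dB)$, $(f\circ\phi)\,\phi^*B = \phi^*(fB)$, and $i_{\widetilde{X}}\,\phi^*B = (i_{\phi_*\widetilde{X}} B)\circ\phi$, exactly as in the paper. Your explicit remark that these identities must be read pointwise (since $\phi_*\widetilde{X}$ is not a global vector field) is a point the paper leaves implicit.
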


\begin{proof}
 We have
\begin{align*}
 \tau_{\phi^*B} L_{\widetilde{M}} =~& \big\{ (\widetilde{X}, \widetilde{f}) \oplus (\widetilde{\alpha}, \widetilde{g}) + i_{(\widetilde{X}, \widetilde{f})} (d \phi^*  B , \phi^* B) ~|~   (\widetilde{X}, \widetilde{f}) \oplus (\widetilde{\alpha}, \widetilde{g}) \in L_{\widetilde{M}} \big\}  \\
=~& \big\{ (\widetilde{X}, \widetilde{f}) \oplus (\widetilde{\alpha} + i_{\widetilde{X}} d \phi^*  B + \widetilde{f} \phi^*B~,~ \widetilde{g} - i_{\widetilde{X}} \phi^*B )  ~|~   (\widetilde{X}, \widetilde{f}) \oplus (\widetilde{\alpha}, \widetilde{g}) \in L_{\widetilde{M}}  \big\}.
\end{align*}
Therefore,
\begin{align*}
 \phi_* \big(   \tau_{\phi^*B} L_{\widetilde{M}} \big) =~& \big\{  ( \phi_* \widetilde{X}, f) \oplus (\alpha, g)   ~ |~  (\widetilde{X}, f \circ \phi) \oplus (\phi^* \alpha, g \circ \phi ) \in \tau_{\phi^*B} L_{\widetilde{M}} \big\} \\
=~& \big\{  ( \phi_* \widetilde{X}, f) \oplus (\alpha, g)   ~ |~    (\widetilde{X}, f \circ \phi) \oplus (\phi^* \alpha - i_{\widetilde{X}} d \phi^*  B - (f \circ \phi) \phi^*B ~,~ g \circ \phi + i_{\widetilde{X}} \phi^*B) \in  L_{\widetilde{M}}  \big\}.
\end{align*}
On the other hand,
\begin{align*}
 \tau_B \big(  \phi_* (L_{\widetilde{M}})  \big) =~& \big\{ (\phi_* \widetilde{X}, f) \oplus (\alpha, g) + i_{(\phi_* \widetilde{X},~ f)} (d B, B) ~|~ (\widetilde{X}, f \circ \phi) \oplus (\phi^*\alpha, g \circ \phi) \in L_{\widetilde{M}} \big\} \\
=~& \big\{ (\phi_* \widetilde{X}, f) \oplus ( \alpha + i_{\phi_* \widetilde{X}} d B + f B~,~ g - i_{\phi_* \widetilde{X}} B)   ~|~ (\widetilde{X}, f \circ \phi) \oplus (\phi^*\alpha , g \circ \phi) \in L_{\widetilde{M}}  \big\} \\
=~& \big\{ (\phi_* \widetilde{X}, f) \oplus (\zeta, h) ~| ~  (\widetilde{X}, f \circ \phi) \oplus (\phi^*\zeta - \phi^* i_{\phi_* \widetilde{X}} d B - \phi^*(f B)~,~ h \circ \phi + (i_{\phi_* \widetilde{X}} B) \circ \phi) \in L_{\widetilde{M}} \big\}.
\end{align*}
Since $i_{\widetilde{X}}\phi^* d B = \phi^* i_{\phi_* \widetilde{X}} d B$~,~ $\phi^* (f B) = (f \circ \phi) \phi^* B$ and $i_{\widetilde{X}} \phi^*B = (i_{\phi_* \widetilde{X}} B) \circ \phi$, we have
$$\phi_* \big( \tau_{\phi^*B}  L_{\widetilde{M}}  \big)   = \tau_B   \big( \phi_* (L_{\widetilde{M}})  \big).$$
\end{proof}

\begin{prop}\label{gauge-contact-groupoid}
Let $(G, \eta)$ be a contact manifold and $\phi : (G, \eta) \rightarrow (M , \pi, E)$ be a Jacobi map. 
If $B \in \Omega^1(M)$ is a $(\pi,E)$-admissible $1$-form on $M$
then $\widehat{\eta} := \eta - \phi^* B$ is a contact $1$-form on $G$.
Moreover, in this case, $\phi: (G, \widehat{\eta}) \rightarrow (M, \tau_B (\pi, E))$ is a Jacobi map.
\end{prop}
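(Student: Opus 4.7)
The plan is to work in the Dirac--Jacobi framework and reduce the contact property of $\widehat{\eta}$ to the $(\pi, E)$-admissibility of $B$ via a pointwise linear-algebraic argument. First, a direct unwinding of the definitions shows that $\tau_{\phi^*B}(L_{-\eta}) = L_{-\widehat{\eta}}$: the inner contraction $i_{(X,f)}(d\phi^*B, \phi^*B)$ combines with $(i_Xd\eta + f\eta, -\eta(X))$ to produce $(i_Xd\widehat{\eta} + f\widehat{\eta}, -\widehat{\eta}(X))$. Then Lemma \ref{gauge-pullback-inv}, together with the identification $\phi_*(L_{-\eta}) = L_{(\pi,E)}$ coming from the Jacobi map property, gives
\[
\phi_*(L_{-\widehat{\eta}}) \;=\; \tau_B(\phi_*(L_{-\eta})) \;=\; \tau_B(L_{(\pi,E)}) \;=\; L_{\tau_B(\pi,E)},
\]
where the last step uses the $(\pi,E)$-admissibility of $B$. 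This relation will handle the second assertion of the proposition once $\widehat{\eta}$ is known to be contact.

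The main task is therefore to prove $\widehat{\eta}$ is a contact $1$-form. By Remark \ref{contact-remark-1} it suffices to show that $\phi^*B$ is $(\pi_\eta, E_\eta)$-admissible on $G$, since then $\tau_{\phi^*B}(\pi_\eta, E_\eta)$ is the Jacobi structure of a contact form, and that contact form is $\eta - \phi^*B = \widehat{\eta}$. I would check this admissibility pointwise: fix $x \in G$, suppose $\bigl(\Id + \widetilde{(d\phi^*B, \phi^*B)} \circ (\pi_\eta, E_\eta)^\sharp\bigr)(\alpha, g) = 0$, and set $(Y, h) := (\pi_\eta, E_\eta)^\sharp(\alpha, g)$. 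Unwinding the definitions yields
\[
\alpha \;=\; -\phi^*\bigl(i_{\phi_*Y}\,dB + hB\bigr), \qquad g \;=\; B(\phi_*Y),
\]
so in particular $(\alpha, g)$ automatically lies in the image of $(\phi^*, \mathrm{id})$.

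Applying $\phi_*$ and using the Jacobi map identity $\phi_* \circ (\pi_\eta, E_\eta)^\sharp \circ \phi^* = (\pi,E)^\sharp$ then translates the above into the equation $\bigl(\Id + (\pi,E)^\sharp \circ \widetilde{(dB, B)}\bigr)(\phi_*Y, h) = 0$ at $\phi(x)$. The hard part will be to connect this to the admissibility hypothesis, which asserts invertibility of the companion operator $\Id + \widetilde{(dB,B)} \circ (\pi,E)^\sharp$ rather than the order encountered here. I would bridge this gap by invoking the standard fact that for square compositions the operators $\Id + PQ$ and $\Id + QP$ are simultaneously invertible (Sylvester's identity); applied with $P = (\pi,E)^\sharp$ and $Q = \widetilde{(dB,B)}$ this forces $(\phi_*Y, h) = 0$. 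Substituting back gives $\alpha = 0$ and $g = 0$, and consequently $(Y,h) = (\pi_\eta, E_\eta)^\sharp(0,0) = 0$, establishing the injectivity (and hence invertibility) of the gauge operator at $x$. Combining the contact property of $\widehat{\eta}$ so obtained with the identification $L_{-\widehat{\eta}} = L_{(\pi_{\widehat{\eta}}, E_{\widehat{\eta}})}$ recorded before the proposition, the pushforward computation from the first paragraph yields $\phi_*(L_{(\pi_{\widehat{\eta}}, E_{\widehat{\eta}})}) = L_{\tau_B(\pi,E)}$, which means that $\phi \colon (G, \widehat{\eta}) \to (M, \tau_B(\pi, E))$ is a forward Dirac--Jacobi map, hence a Jacobi map.
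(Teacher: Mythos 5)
Your proposal is correct, but it reaches the key conclusion --- that $\widehat{\eta}=\eta-\phi^*B$ is contact --- by a genuinely different route than the paper. The paper works entirely at the level of Dirac--Jacobi structures: after the same reduction $\phi_*(L_{-\eta+\phi^*B})=\tau_B(L_{(\pi,E)})$, it verifies the two kernel conditions of Theorem \ref{contact-dirac-jacobi} for $L_{-\eta+\phi^*B}$ directly. The condition $L_{-\eta+\phi^*B}\cap(\{0\}\oplus(T^*_xG\times\mathbb{R}))=\{0\}$ is automatic for any precontact form, and the condition $\ker(L_{-\eta+\phi^*B})=0$ is obtained in two steps: an element $(X_x,\lambda)$ of the kernel pushes forward into $\ker(L_{\tau_B(\pi,E)})=0$, forcing $\lambda=0$ and $X_x\in\ker\phi_*$, and on $\ker\phi_*$ the gauge term $i_{(X,0)}(d\phi^*B,\phi^*B)$ vanishes, so the contact property of $\eta$ itself kills $X_x$. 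You instead prove the equivalent statement that $\phi^*B$ is $(\pi_\eta,E_\eta)$-admissible by a pointwise linear-algebra argument: a kernel element $(\alpha,g)$ of $\Id+\widetilde{(d\phi^*B,\phi^*B)}\circ(\pi_\eta,E_\eta)^\sharp$ is expressed entirely through $(\phi_*Y,h)$, the Jacobi-map identity transports the equation to $\bigl(\Id+(\pi,E)^\sharp\circ\widetilde{(dB,B)}\bigr)(\phi_*Y,h)=0$ on $M$, and the simultaneous invertibility of $\Id+PQ$ and $\Id+QP$ converts $(\pi,E)$-admissibility into $(\phi_*Y,h)=0$, whence $(\alpha,g)=0$; Remark \ref{contact-remark-1} then delivers both the contact property and the explicit form $\eta-\phi^*B$. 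I have checked the transport step and the Sylvester step; both are sound (note that the Jacobi-map identity you need is the full one, $(\phi_*\times\mathrm{id})\circ(\pi_\eta,E_\eta)^\sharp\circ(\phi^*\times\mathrm{id})=(\pi,E)^\sharp$, which also uses $\phi_*E_\eta=E$ in the second component, not only $\pi^\sharp=\phi_*\circ\pi_\eta^\sharp\circ\phi^*$). Your approach buys a self-contained proof of admissibility of $\phi^*B$ that does not invoke Theorem \ref{contact-dirac-jacobi}, and it localizes cleanly fiber by fiber; the paper's approach stays closer to the Dirac--Jacobi formalism and makes the role of the contact hypothesis on $\eta$ (killing the $\ker\phi_*$ directions) more transparent. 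One cosmetic point: in your first paragraph the elements of $L_{-\eta}$ should read $(-i_Xd\eta-f\eta,\,\eta(X))$ rather than $(i_Xd\eta+f\eta,\,-\eta(X))$; the displayed identity $\tau_{\phi^*B}(L_{-\eta})=L_{-\widehat{\eta}}$ is nevertheless correct since $-\eta+\phi^*B=-\widehat{\eta}$. The concluding derivation of the second assertion via $\phi_*(L_{(\pi_{\widehat\eta},E_{\widehat\eta})})=L_{\tau_B(\pi,E)}$ matches the paper.
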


\begin{proof}
From the definition of $L_{- \eta}$, we have $\tau_{\phi^*B} L_{- \eta} = L_{- \eta + \phi^*B} .$
Hence,
\begin{align}\label{some-eqn}
\phi_* (L_{- \eta + \phi^*B}) = \phi_* ( \tau_{\phi^*B} L_{-\eta} ) = \tau_B ( \phi_* (L_{-\eta})) =  \tau_B (  L_{(\pi,E)}).
\end{align}
The second equality follows from Lemma \ref{gauge-pullback-inv} and the last equality follows since $\phi$ is a Jacobi map.

Obviously, the Dirac-Jacobi structure $L_{- \eta + \phi^* B}$ satisfies
$$ L_{- \eta + \phi^* B} \cap (\{0\} \oplus (T^*_xG \times \mathbb{R})) = \{0\} ,~~~ \text{ for all } x \in G. $$

From (\ref{some-eqn}) we also have
\begin{align}\label{some-eqn-j}
\phi_* (L_{- \eta + \phi^*B}) =  \tau_B (  L_{(\pi,E)}) = L_{\tau_B(\pi,E)}.
\end{align}
For any $x \in G$, if $(X_x , \lambda) \in ker ~(L_{- \eta + \phi^* B}) |_x$ then $(\phi_* (X_x) , \lambda) \in ker ~(L_{\tau_B (\pi,E)}) |_{\phi(x)} = 0$ as the Dirac-Jacobi
structure $L_{\tau_B (\pi,E)}$ is given by a Jacobi structure. This implies that $X_x \in ker~ \phi_*$ and $\lambda = 0$.

Let $(X_x, 0) \in ker ~( L_{- \eta + \phi^*B} )|_x$.
Then
$$(X_x, 0) \in ker ~\phi_* \cap ker ~(L_{- \eta + \phi^*B})|_x = ker ~\phi_* \cap ker ~(L_{-\eta})|_x = 0.$$
Here the first equality follows since $\phi_* (X_x) = 0$ and the last equality follows since $ker ~ (L_{-\eta})|_x = 0$.
Therefore, the Dirac-Jacobi structure $L_{- \eta + \phi^* B}$ also satisfies
$$ L_{- \eta + \phi^* B} \cap ((T_xG \times \mathbb{R}) \oplus \{0\}) = \{0\},~~~ \text{ for all } x \in G.$$
Hence, by Theorem \ref{contact-dirac-jacobi} the $1$-form $\eta - \phi^* B$ defines a contact structure on $G$. The second part follows from Equation (\ref{some-eqn-j}).
\end{proof}

\begin{remark}\label{some-remark}
Similarly, one can prove the followings.
\begin{itemize}
 \item[(i)] If $\phi : (G, \eta) \rightarrow (M, \pi, E)$ is an anti-Jacobi map, then $\widehat{\eta} = \eta + \phi^* B$ is a contact $1$-form on $G$
and $\phi : (G, \widehat{\eta}) \rightarrow (M, \tau_B(\pi,E))$ is an anti-Jacobi map.
This follows from the following observation that
$$ \phi_* (L_{-\eta - \phi^* B}) = \phi_* (\tau_{-\phi^*B} L_{-\eta}) = \tau_{-B} (\phi_* (L_{-\eta})) = \tau_{-B} ((L_{(\pi,E)})_{-}) = (\tau_{B} (L_{(\pi,E)}))_{-}.$$

 \item[(ii)] If $(\phi, \sigma) : (G, \eta) \rightarrow (M, \pi, E)$ is a conformal Jacobi map, then $\widehat{\eta} = \eta - \sigma \phi^* B$ is a contact $1$-form on $G$
and $(\phi, \sigma) : (G, \widehat{\eta}) \rightarrow (M, \tau_B(\pi,E))$ is a conformal Jacobi map.

Note that the conformal change of a contact form $\eta$ by a nowhere vanishing function $\sigma$ is given by $\frac{\eta}{\sigma}$. Hence, the assertion follows from
the observation that
$$ \phi_* ( L_{ \frac{- \eta + \sigma \phi^* B}{\sigma} }  ) = \phi_* ( L_{- \frac{\eta}{\sigma} + \phi^* B}) = \phi_* (\tau_{\phi^*B} L_{ - \frac{\eta}{\sigma}}) = \tau_B (  \phi_* (L_{- \frac{\eta}{\sigma}})) = \tau_B (L_{(\pi,E)}) .$$
\end{itemize}
\end{remark}

To prove the next theorem, we need the following property of a contact groupoid. More precisely, if $(G \rightrightarrows M, \eta, \sigma)$ is a contact groupoid, the kernels
of $\alpha_*$ and $\beta_*$ are given by
\begin{align}\label{contact-ham}
 (ker~ \alpha_*)|_x = \{ X_{\beta^* f} (x) |~ f \in C^\infty(M) \},~~ (ker~ \beta_*)_x = \{ X_{e^\sigma \alpha^* f}(x) |~  f \in C^\infty(M) \}, \text{ for } x \in G,
\end{align}
where $X_h$ is the hamiltonian vector field on $G$ associated to the function $h \in C^\infty(G)$.

\begin{thm}\label{final-thm}
 Let $(G \rightrightarrows M, \eta, \sigma)$ be a contact groupoid integrating the Jacobi structure $(\pi, E)$ on $M$. Let $B$ 
be a $(\pi,E)$-admissible $1$-form on $M$. Then $(G \rightrightarrows M, ~\eta - e^\sigma \alpha^* B + \beta^* B , \sigma)$ is a contact groupoid integrating
$(M, \tau_B (\pi,E)).$
\end{thm}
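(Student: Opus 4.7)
My plan is to verify directly that $\widehat{\eta} := \eta - e^{\sigma}\alpha^{*}B + \beta^{*}B$, together with the unchanged multiplicative function $\sigma$, equips $G \rightrightarrows M$ with the structure of a contact groupoid integrating $\tau_B(\pi,E)$. This breaks into three tasks: multiplicativity of $\widehat{\eta}$ with respect to $\sigma$; the contact condition on $\widehat{\eta}$; and the identification of the induced base Jacobi structure as $\tau_B(\pi,E)$.

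First I will verify multiplicativity. For $(X_g, Y_h) \in (TG)^{(2)}$, I expand $\widehat{\eta}_{gh}(X \oplus_{TG} Y)$ using multiplicativity of $\eta$, the cocycle identity $\sigma(gh) = \sigma(g) + \sigma(h)$, and the tangent-groupoid identities $\alpha_*(X \oplus_{TG} Y) = \alpha_* Y$ and $\beta_*(X \oplus_{TG} Y) = \beta_* X$. A short calculation yields
\[
\widehat{\eta}_{gh}(X \oplus_{TG} Y) - \widehat{\eta}_g(X) - e^{\sigma(g)} \widehat{\eta}_h(Y) = e^{\sigma(g)} \bigl[ B_{\alpha(g)}(\alpha_* X) - B_{\beta(h)}(\beta_* Y) \bigr],
\]
which vanishes by composability ($\alpha(g) = \beta(h)$ and $\alpha_* X = \beta_* Y$). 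Note that neither intermediate form $\eta - e^{\sigma}\alpha^* B$ nor $\eta + \beta^* B$ is multiplicative on its own; it is precisely the balance between the two corrections that restores multiplicativity.

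For the remaining tasks, I will invoke Proposition \ref{gauge-iso-lie}: the Lie algebroid structures on $T^*M \times \mathbb{R}$ coming from $(\pi, E)$ and $\tau_B(\pi, E)$ are isomorphic and share the same distinguished $1$-cocycle $(-E, 0)$, so the same underlying Lie groupoid $G \rightrightarrows M$ integrates $\tau_B(\pi, E)$ with the same multiplicative function $\sigma$. It therefore suffices to show that $\beta : (G, \widehat{\eta}) \to (M, \tau_B(\pi, E))$ is anti-Jacobi, that is, $\beta_*(L_{-\widehat{\eta}}) = (L_{\tau_B(\pi, E)})_{-}$, together with the two kernel conditions of Theorem \ref{contact-dirac-jacobi} for $L_{-\widehat{\eta}}$. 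Writing $L_{-\widehat{\eta}} = \tau_{e^{\sigma}\alpha^* B - \beta^* B}(L_{-\eta})$ and using Remark \ref{some-remark}(i) to handle the $-\beta^* B$ contribution, I will control the remaining ``wrong-handed'' correction $e^{\sigma}\alpha^* B$ via the characterization (\ref{contact-ham}) of $\ker \beta_*$ as hamiltonian vector fields of the form $X_{e^{\sigma}\alpha^* f}$, which forces its contribution to vanish after push-forward by $\beta_*$. The kernel conditions for $L_{-\widehat{\eta}}$ will then follow by the same reasoning as in the proof of Proposition \ref{gauge-contact-groupoid}.

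The hard part will be this last step: because $-e^{\sigma}\alpha^* B + \beta^* B$ is not a pullback along either $\alpha$ or $\beta$ alone, Lemma \ref{gauge-pullback-inv} does not apply directly, and controlling the mixed push-forward genuinely uses the geometry of contact groupoids, specifically the explicit description (\ref{contact-ham}) of $\ker \alpha_*$ and $\ker \beta_*$ as hamiltonian sections along the source and target fibres. The composability-based cancellation that drove Step 1 reappears here as the geometric mechanism that makes the two cross contributions disappear in the appropriate push-forward.
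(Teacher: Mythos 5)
Your proposal is correct and follows essentially the same route as the paper: a direct cancellation computation for multiplicativity, a decomposition of the correction $-e^{\sigma}\alpha^{*}B+\beta^{*}B$ into the piece handled by the pullback lemma (Lemma \ref{gauge-pullback-inv} via Remark \ref{some-remark}) and the ``wrong-handed'' piece killed under $\beta_{*}$ by the hamiltonian description (\ref{contact-ham}) of $\ker\beta_{*}$, and the kernel criteria of Theorem \ref{contact-dirac-jacobi} for contactness as in Proposition \ref{gauge-contact-groupoid}. The only cosmetic difference is the order in which the two corrections are absorbed and that the paper concludes via uniqueness of the base Jacobi structure after exhibiting $(\alpha,e^{\sigma})$ as conformal Jacobi and $\beta$ as anti-Jacobi, rather than via Proposition \ref{gauge-iso-lie}.
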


\begin{proof}
It follows from Remark \ref{some-remark}(ii) that $\widehat{\eta} = \eta - e^\sigma \alpha^* B$ is a contact $1$-form on $G$ for which $(\alpha, e^\sigma) : (G, \widehat{\eta}) \rightarrow (M, \tau_B (\pi, E))$
is a conformal Jacobi map. Moreover, it follows from (\ref{contact-ham}) that
$$ \beta_* (L_{- \widehat{\eta}}) = \beta_* (L_{- \eta + e^\sigma \alpha^*B}) = \beta_* (\tau_{e^\sigma \alpha^* B} L_{- \eta}) = \beta_* (L_{- \eta}) = (L_{(\pi,E)})_{-}.$$
Therefore, $\beta : (G, \widehat{\eta}) \rightarrow (M, \pi, E)$ is an anti-Jacobi map. Hence,
by Remark \ref{some-remark}(i) the $1$-form
$\eta_B = \eta - e^\sigma \alpha^* B + \beta^* B$ is a contact $1$-form on $G$ and $\beta : (G, \eta_B) \rightarrow (M, \tau_B (\pi, E))$ is an anti-Jacobi map. Similarly,
$(\alpha, e^\sigma) : (G, \eta_B) \rightarrow (M, \tau_B (\pi, E))$ is a conformal Jacobi map.

Moreover,
\begin{align*}
 (\eta_B)_{gh} (X_g \oplus_{TG} Y_h) =~& (\eta + e^\sigma \alpha^* B - \beta^* B)(gh) (X_g \oplus_{TG} Y_h) \\
=~& \eta_g (X_g) + e^{\sigma (g)} \eta_h (Y_h) + e^{\sigma (gh)} B |_{\alpha (gh)} \alpha_* (Y_h) - B|_{\beta (gh)} \beta_* (X_g),
\end{align*}
and
\begin{align*}
(\eta_B)_g (X_g) + e^{\sigma (g)} (\eta_B)_h (Y_h) =~& \eta_g (X_g) + \cancel{e^{\sigma (g)} B |_{\alpha (g)}  \alpha_* (X_g)} - B |_{\beta (g)} \beta_* (X_g) \\
&~ + e^{\sigma (g)} \eta_h (Y_h) + e^{\sigma (g)} e^{\sigma (h)} B |_{\alpha (h)} \alpha_* (Y_h) - \cancel{e^{\sigma (g)} B|_{\beta (h)} \beta_* (Y_h)},
\end{align*}
for all $(X_g, Y_h) \in (TG)^{(2)}.$ It follows that
$$ (\eta_B)_{gh} (X_g \oplus_{TG} Y_h) =  (\eta_B)_g (X_g) + e^{\sigma (g)} (\eta_B)_h (Y_h), \text{ for } (X_g, Y_h) \in (TG)^{(2)}.$$
Hence, $(G \rightrightarrows M, \eta_B, \sigma)$ is a contact groupoid.
Moreover, we have
the map $(\alpha, e^\sigma) : (G, \eta_B) \rightarrow (M, \tau_B (\pi, E))$ is a conformal Jacobi map and $\beta : (G, \eta_B) \rightarrow (M, \tau_B (\pi,E))$
is an anti-Jacobi map. It follows from the uniqueness of the Jacobi structure on the base of a contact groupoid that $(G \rightrightarrows M, \eta_B, \sigma)$
is a contact groupoid integrating $(M, \tau_B(\pi, E)).$
\end{proof}

%



\begin{remark}
Let $\eta$ be a contact $1$-form on $M$ considered as a Jacobi structure. Then it is integrable and the corresponding contact groupoid is given by
$(M \times \mathbb{R} \times M \rightrightarrows M , ~\underline{\eta} = e^\sigma pr_3^*\eta - pr_1^* \eta, ~\sigma)$. The source, target and the partial multiplication of the groupoid structure is given by 
\begin{align*}
\alpha (x, t, y) =&~ pr_3 (x, t , y) = y,\\
\beta (x, t , y) =&~ pr_1 (x, t, y) = x,\\
(x, t, y) (y , s , z ) =&~ (x, t + s , z),
\end{align*}
for $(x, t, y) , (y , s , z) \in M \times \mathbb{R} \times M$. The multiplicative function $\sigma$ on this groupoid is just the projection onto the second factor.

If $B \in \Omega^1(M)$ is a $1$-form on the base $M$ such that the gauge transformation defines a Jacobi structure (this is infact a contact structure), then we have
$\tau_B (\eta) = \eta - B$. By Theorem \ref{final-thm} the corresponding contact $1$-form on the groupoid is given by
\begin{align*}
 \underline{\eta} - e^\sigma pr_3^* B + pr_1^* B = e^\sigma pr_3^* (\eta - B) - pr_1^* (\eta - B).
\end{align*}
This is precisely the global contact $1$-form $\underline{\eta- B}$ on the groupoid associated to the contact structure $\tau_B(\eta) = \eta - B$ on the base.
\end{remark}

In the next section we give an alternative proof of Theorem \ref{final-thm} using gauge transformations of multiplicative Jacobi structures (see Theorem \ref{final-theorem} and Remark \ref{final-theorem-rem}).

\medskip

\section{Gauge transformations of Jacobi groupoids}\label{sec-6}

In this section, we study gauge transformations of Jacobi groupoids and of generalized Lie bialgebroids.

We recall that a Jacobi algebroid is a pair $(A, \phi_0)$ of a Lie algebroid $A$ together with a $1$-cocycle $\phi_0 \in \Gamma A^*$ of it \cite{igl-marr, grab-mar}. Given a Jacobi algebroid $(A, \phi_0)$ the differential $d_A$ of the Lie algebroid $A$ can be twisted by $\phi_0$ to define a new differential
$$ d_A^{\phi_0} : \Gamma (\largewedge^\bullet A^*) \rightarrow  \Gamma (\largewedge^{\bullet +1} A^*), ~ \alpha \mapsto d_A \alpha + \phi_0 \wedge \alpha .$$
Moreover, the Gerstenhaber bracket $[-,-]$ on the space of multisections of $A$ can be twisted by $\phi_0$ to define a new bracket (called Schouten-Jacobi bracket) 
$[-,-]^{\phi_0} : \Gamma(\largewedge^\bullet A) \times \Gamma(\largewedge^\bullet A) \rightarrow \Gamma(\largewedge^{\bullet} A)$ of degree $-1$ by the following
$$[P, Q]^{\phi_0} = [P, Q] + (-1)^{p+1}(p-1) P \wedge i_{\phi_0} Q - (q-1) i_{\phi_0} P \wedge Q, ~\text{ for } P \in \Gamma(\largewedge^p A), Q \in \Gamma(\largewedge^q A).$$

\begin{defn}\cite{igl-marr}
A generalized Lie bialgebroid $((A, \phi_0), (A^*, X_0))$ consists of a pair of Jacobi algebroids $(A, \phi_0)$ and $(A^*, X_0)$ in duality satisfying
$$ d_*^{X_0} [P, Q]^{\phi_0} = [d_*^{X_0} P~,~ Q]^{\phi_0}  +~ (-1)^{p+1}   [P ~,~ d_*^{X_0} Q]^{\phi_0},$$
for $P \in \Gamma(\largewedge^p A)$ and $Q \in \Gamma(\largewedge^\bullet A)$. Here $d_*^{X_0}$ denote the differential of the Jacobi algebroid $(A^*, X_0)$ and
$[-,-]^{\phi_0}$ denotes the Schouten-Jacobi bracket on the multisections of $A$ associated the Jacobi algebroid $(A, \phi_0).$
\end{defn}

See \cite{igl-marr2, grab-mar , das0} for more details. If $\phi_0 = 0$ and $X_0 = 0$, one reduces the definition of a Lie bialgebroid \cite{mac-xu}.

\begin{exam}
If $(M, \pi, E)$ is a Jacobi manifold, the pair $\big( ~(T^*M \times \mathbb{R}, (-E, 0)) , ~(TM \times \mathbb{R}, (0,1)) ~\big)$ is a 
generalized Lie bialgebroid \cite{igl-marr}. 
\end{exam}

Conversely, the base of a generalized Lie bialgebroid carries a Jacobi structure.
Let $((A, \phi_0), (A^*, X_0))$ be a generalized Lie bialgebroid over $M$.
Suppose the bracket and anchor of the Lie algebroids $A$ and $A^*$ are given by $([-,-], a)$ and $([-,-]_*, a_*)$ , respectively.
Then the induced Jacobi structure $(\pi, E)$ on $M$ is given by
$$ (\pi, E)^\sharp := \big(a(-), \phi_0(-)\big) \circ \big(a_* (-), X_0 (-)\big)^* : T^*M \times \mathbb{R} \longrightarrow TM \times \mathbb{R} ,$$
where the bundle maps $(a(-), \phi_0 (-)) : A \longrightarrow TM \times \mathbb{R}$ 
and $(a_* (-), X_0 (-)) : A^* \longrightarrow TM \times \mathbb{R}$ are respectively given by
$X \mapsto (a(X), \phi_0 (X))$ and $\alpha \mapsto (a_* (\alpha), X_0 (\alpha))$, for $X \in \Gamma A,~\alpha \in \Gamma A^*$ \cite{igl-marr}.

Let $B \in \Omega^1(M)$ be a $(\pi, E)$-admissible $1$-form on $M$. One can easily check that this condition is being equivalent to the invertability of the map
$$ \psi_B :=~  Id ~ + ~ \big(a(-), \phi_0(-)\big)^* \circ \widetilde{(d B, B)} \circ \big(a_* (-), X_0 (-)\big) : A^* \longrightarrow A^*.$$
Hence, one can define a new Lie algebroid structure on $A^*$ whose bracket and anchor are given by
$$ [\alpha, \beta]_*^B := \psi_B ~ [\psi_B^{-1} (\alpha), \psi_B^{-1} (\beta)]_* ~~~, \quad a_*^B := a_* \circ \psi_B^{-1}.$$
We denote this Lie algebroid structure on $A^*$ by $(A^*)^B$. Moreover,
$X_0^B (-)  := X_0 \circ \psi_B^{-1}(-)$ defines a $1$-cocycle of this Lie algebroid. Actually, we have
$$(a_*^B (-), X_0^B(-)) = (a_* (-), X_0(-)) \circ \psi_B^{-1} (-).$$
 We denote the Jacobi algebroid $((A^*)^B, X_0^B )$ simply by $(A^*, X_0)_B$. Moreover, the pair $( (A, \phi_0), (A^*, X_0)_B )$ is a generalized Lie bialgebroid. We call this generalized Lie bialgebroid as a gauge transformation of the given one. 
Note that the bundle map associated to the Jacobi structure on $M$ induced from the
generalized Lie bialgebroid $((A, \phi_0), (A^*, X_0)_B)$ is given by
\begin{align*}
  \big( a(-), \phi_0 (-) \big) \circ \big(a_*^B (-), X_0^B (-)\big)^* =  \big( a(-), \phi_0 (-) \big) \circ (\psi_B^*)^{-1} \circ (a_* (-), X_0 (-))^*
\end{align*}
It is easy to show that this map coincides with $(\pi, E)^\sharp \circ \big(Id + \widetilde{(d B, B)} \circ (\pi, E)^\sharp \big)^{-1}$
which is same as $(\tau_B (\pi,E))^\sharp$. Hence, the
Jacobi structure on $M$ induced from the  generalized Lie bialgebroid $((A, \phi_0), (A^*, X_0)_B)$ is simply given by $\tau_B (\pi, E).$

\begin{remark}\label{glb-remark}
Let $(\pi, E)$ be a Jacobi structure on $M$ and consider the generalized Lie bialgebroid $\big(~(T^*M \times \mathbb{R}, (-E, 0)) , (TM \times \mathbb{R}, (0,1)) ~ ~~\big)$. 
Then one can check that the gauge transformed generalized Lie bialgebroid $\big(~(T^*M \times \mathbb{R}, (-E, 0)), (TM \times \mathbb{R}, (0,1))_B~\big)$ is isomorphic to the
generalized Lie bialgebroid $\big(~(T^*M \times \mathbb{R}, (-E_B, 0)) , (TM \times \mathbb{R}, (0,1)) ~ ~~\big)$ associated to the transformed Jacobi
structure $\tau_B(\pi, E) = (\pi_B, E_B)$ on $M$.
\end{remark}

\medskip

Next, we recall multiplicative Jacobi structures on Lie groupoids and study gauge transformations of them in the multiplicative sense.
First, let us recall few things.
Let $G \rightrightarrows M$ be a Lie groupoid and $\sigma \in C^\infty (G)$ be a multiplicative function. Then the tangent Lie groupoid $TG \rightrightarrows TM$ can be twisted by $\sigma$ to define a new Lie groupoid $TG \times \mathbb{R} \rightrightarrows TM \times \mathbb{R}$. The source, target and partial multiplication are given by
\begin{align*}
( \alpha_* )_\sigma  (X_g , \lambda) =& ~(\alpha_* (X_g), X_g (\sigma) + \lambda )\\
( \beta_* )_\sigma (Y_h, \mu) =& ~( \beta_* (Y_h), \mu )\\
(X_g , \lambda) \oplus_{TG \times \mathbb{R}} (Y_h, \mu) =& ~ (X_g \oplus_{TG} Y_h, \lambda),
\end{align*}
for $(X_g , \lambda) \in T_gG \times \mathbb{R}, ~ (Y_h, \mu) \in T_hG \times \mathbb{R}$ and $(\alpha_* )_\sigma  (X_g , \lambda) = (\beta_*)_\sigma (Y_h, \mu)$.

Let $A \rightarrow M$ be the Lie algebroid of $G \rightrightarrows M$. One can also twist the usual cotangent groupoid $T^*G \rightrightarrows A^*$ by the multiplicative function $\sigma$ to define a new groupoid $T^*G \times \mathbb{R} \rightrightarrows A^*$ with structure maps
\begin{align*}
(\tilde \alpha)_\sigma (\omega_g, \gamma) =&~ e^{- \sigma (g)} \tilde{\alpha} (\omega_g)~ \\
(\tilde \beta)_\sigma (\nu_h, \zeta) =&~ \tilde{\beta} (\nu_h)  -  \zeta (d \sigma)|_{\tilde{\beta} (h)}\\
(\omega_g, \gamma) \oplus_{T^*G \times \mathbb{R}} (\nu_h, \zeta) = &~ \big( \omega_g + e^{\sigma (g)} \zeta (d \sigma)|_g  \oplus_{T^*G} ~ e^{\sigma (g)} \nu_h ~,~ \gamma + e^{\sigma (g)} \zeta  \big) ,
\end{align*}
for $(\omega_g, \gamma) \in T_g^*G \times \mathbb{R}, ~ (\nu_h, \zeta) \in T_h^*G \times \mathbb{R}$ and $(\tilde \alpha)_\sigma (\omega_g, \gamma) = (\tilde \beta)_\sigma (\nu_h, \zeta)$.
This Lie groupoid is called the $1$-jet Lie groupoid of $G$ twisted by $\sigma$ \cite{igl-marr2}.

\begin{defn}\cite{igl-marr2}
A Jacobi groupoid is a Lie groupoid $G \rightrightarrows M$ together with a multiplicative function $\sigma \in C^\infty(G)$ and a Jacobi structure $(\pi_G, E_G)$ on $G$ such that the induced map $(\pi_G, E_G)^\sharp : T^*G \times \mathbb{R} \rightarrow TG \times \mathbb{R}$ is a Lie groupoid morphism 
\[
\xymatrixrowsep{0.4in}
\xymatrixcolsep{0.6in}
\xymatrix{
	T^*G \times \mathbb{R} \ar[r]^{(\pi_G, E_G)^\sharp} \ar[d] \ar@<-4pt>[d] & TG \times \mathbb{R} \ar[d] \ar@<-4pt>[d]\\
	A^*  \ar[r]_{} & TM \times \mathbb{R}
}
\]
from the $1$-jet Lie groupoid to the twisted tangent Lie groupoid defined above. 
\end{defn}

Let $(G \rightrightarrows M,  \pi_G, E_G , \sigma)$ be a Jacobi groupoid as above with the Lie algebroid $A$. Then by differentiating $\sigma$ we get a $1$-cocycle $\phi_0 \in \Gamma A^*$ of the Lie algebroid. 
Note that, the dual bundle $A^*$ can be identified with the conormal bundle
$(TM)^0 \rightarrow M.$ Using this identification, one gets a Lie algebroid structure on $A^*$ whose bracket $[-,-]_*$ and the anchor $a_*$ are given by
$$ [\alpha, \beta]_* (x) = pr_1 \big([(\widetilde{\alpha}, 0), (\widetilde{\beta}, 0)]_{(\pi_G, E_G)} \big)(x) ~ \text{ and } ~ a_* (\alpha)(x) = \pi_G^\sharp (\widetilde{\alpha})(x), ~ \text{ for } \alpha, \beta \in \Gamma A^*,~ x \in M,$$
where $\widetilde{\alpha}, \widetilde{\beta}$ be any extension of $\alpha$ and $\beta$ to $1$-forms on $G$. 
Moreover, there is a distinguised $1$-cocycle $X_0 \in \Gamma A$ of this Lie algebroid given by
$\langle X_0 (x) , \alpha (x) \rangle =  - \langle \alpha(x) , E_G (x) \rangle$, for $x \in M$ and $\alpha (x) \in A_x^* \cong (T_xM)^0$. The pair
$((A, \phi_0), (A^*, X_0))$ forms a generalized Lie bialgebroid 
\cite{igl-marr2}. Note that the structures on $(A^*, X_0)$ depends only on the Jacobi structure $(\pi_G, E_G)$ on $G$. It also turns out that the base map of the Lie groupoid morphism $(\pi_G, E_G)^\sharp : T^*G \times \mathbb{R} \rightarrow TG \times \mathbb{R}$ is given by $(a_*(-), X_0 (-) : A^* \rightarrow TM \times \mathbb{R}.$

Let $(G \rightrightarrows M,  \pi_G, E_G , \sigma)$ be a Jacobi groupoid with its generalized Lie bialgebroid $((A, \phi_0), (A^*, X_0))$. Since a gauge transformation of the generalized Lie bialgebroid $((A, \phi_0), (A^*, X_0))$
effects only on $(A^*, X_0)$, its effect on the associated Jacobi groupoid is only a change of the Jacobi structure. More precisely, we have the following result.

\begin{thm}\label{final-theorem}
 Let $(G \rightrightarrows M, \pi_G, E_G , \sigma)$ be a Jacobi groupoid with generalized Lie bialgebroid $((A, \phi_0), (A^*, X_0))$ and the induced Jacobi structure
$(\pi, E)$ on $M$. Let $B \in \Omega^1(M)$ and let $B_G = e^\sigma \alpha^* B - \beta^* B \in \Omega^1(G)$. If
$B$ is $(\pi, E)$-admissible then $B_G$ is $(\pi_G, E_G)$-admissible and in this case,
$(G \rightrightarrows M, \tau_{B_G} (\pi_G, E_G), \sigma)$ is a Jacobi groupoid with its generalized Lie bialgebroid $((A, \phi_0), (A^*, X_0)_B).$
\end{thm}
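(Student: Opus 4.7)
The plan is to reduce the theorem to a multiplicativity property of the pair $(dB_G, B_G)$ with respect to the twisted tangent and $1$-jet Lie groupoids of $G$, and then combine this with the bialgebroid gauge transformation already constructed earlier in this section. Concretely, I would set
$$\Phi_{B_G} := Id + \widetilde{(dB_G, B_G)} \circ (\pi_G, E_G)^\sharp \colon T^*G \times \mathbb{R} \longrightarrow T^*G \times \mathbb{R}$$
and show that it is a Lie groupoid isomorphism of the twisted $1$-jet groupoid whose base map is exactly the map $\psi_B\colon A^* \to A^*$ appearing in the definition of $(A^*, X_0)_B$. Granting this, admissibility of $B$ amounts to invertibility of $\psi_B$, and the gauged bundle map $\tau_{B_G}(\pi_G, E_G)^\sharp = (\pi_G, E_G)^\sharp \circ \Phi_{B_G}^{-1}$ is automatically a Lie groupoid morphism into the twisted tangent groupoid, producing the desired Jacobi groupoid.

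The main technical step is verifying that $\widetilde{(dB_G, B_G)}\colon TG \times \mathbb{R} \to T^*G \times \mathbb{R}$ is a morphism from the twisted tangent groupoid $TG \times \mathbb{R} \rightrightarrows TM \times \mathbb{R}$ to the twisted $1$-jet groupoid $T^*G \times \mathbb{R} \rightrightarrows A^*$ covering $\widetilde{(dB, B)}$ on the base (after the appropriate identification $T^*M \times \mathbb{R} \supset A^*$). The form $B_G = e^\sigma \alpha^* B - \beta^* B$ is tailored for this: expanding $dB_G = e^\sigma d\sigma \wedge \alpha^* B + e^\sigma \alpha^* dB - \beta^* dB$, using $(X_g \oplus_{TG} Y_h)(\sigma) = X_g(\sigma) + Y_h(\sigma)$ (from multiplicativity of $\sigma$), together with $\alpha_*(X_g \oplus_{TG} Y_h) = \alpha_*(Y_h)$ and $\beta_*(X_g \oplus_{TG} Y_h) = \beta_*(X_g)$, and then feeding the result into the explicit formulas for $\oplus_{TG \times \mathbb{R}}$ and $\oplus_{T^*G \times \mathbb{R}}$ written out in the excerpt, the two sides match after a direct but lengthy bookkeeping of the $e^\sigma$ factors. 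I expect this multiplicativity check to be the main obstacle.

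Once multiplicativity is in place, the composition $\Phi_{B_G}$ is a Lie groupoid morphism of the twisted $1$-jet groupoid to itself. Its base map on $A^*$ is precisely $Id + (a(-), \phi_0(-))^* \circ \widetilde{(dB, B)} \circ (a_*(-), X_0(-)) = \psi_B$, because the base map of $(\pi_G, E_G)^\sharp$ is $(a_*(-), X_0(-))$ and the computation on units reduces to the corresponding fact on $M$. Invertibility of $\psi_B$ (equivalent to $(\pi, E)$-admissibility of $B$) together with compatibility of $\Phi_{B_G}$ with left translations in $T^*G \times \mathbb{R}$ propagates fibrewise invertibility from the identity section to every vector-bundle fibre $T_g^*G \times \mathbb{R}$, which is exactly $(\pi_G, E_G)$-admissibility of $B_G$. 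Consequently $\tau_{B_G}(\pi_G, E_G)^\sharp$ is still a Lie groupoid morphism, so $(G \rightrightarrows M, \tau_{B_G}(\pi_G, E_G), \sigma)$ is a Jacobi groupoid.

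To identify the generalized Lie bialgebroid of this new Jacobi groupoid, I note that $(A, \phi_0)$ depends only on the Lie algebroid of $G \rightrightarrows M$ and on $\sigma$, neither of which has changed. The companion Jacobi algebroid structure is read off from the base map of the new Jacobi-groupoid morphism, which equals $(a_*(-), X_0(-)) \circ \psi_B^{-1}$; this is precisely the structure defining $(A^*, X_0)_B$. Combining with the earlier observation that the Jacobi structure on $M$ induced from $((A, \phi_0), (A^*, X_0)_B)$ is $\tau_B(\pi, E)$, the induced generalized Lie bialgebroid of the new Jacobi groupoid is $((A, \phi_0), (A^*, X_0)_B)$, completing the proof.
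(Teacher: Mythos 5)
Your overall strategy coincides with the paper's: the key technical input is the multiplicativity of $\widetilde{(d B_G, B_G)} : TG \times \mathbb{R} \rightarrow T^*G \times \mathbb{R}$, the new Jacobi groupoid is obtained by composing Lie groupoid morphisms, and the generalized Lie bialgebroid is identified by restricting to $A^*$. There are, however, two places where your argument as written does not close.

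First, the admissibility of $B_G$. You deduce it from invertibility of the base map $\psi_B : A^* \rightarrow A^*$ of $\Phi_{B_G}$, propagated by translations. But for a morphism of the vector-bundle groupoid $T^*G \times \mathbb{R} \rightrightarrows A^*$ covering the identity on $G$, fibrewise invertibility is controlled by the induced map on the side bundle $A^*$ \emph{and} by the induced map on the core (here $T^*M \times \mathbb{R}$); over a unit the fibre decomposes as $T^*_{1_x}G \times \mathbb{R} \cong A^*_x \oplus (T^*_xM \times \mathbb{R})$, so invertibility of $\psi_B$ alone gives you nothing on the core summand and the propagation has no complete starting point. The paper avoids this by a different device: it checks that $(\beta_*, \mathrm{Id})$ restricts to an isomorphism from $\ker\big(\tau_{B_G}(L_{(\pi_G,E_G)})\big)\big|_g$ onto $\ker\big(\tau_B(L_{(\pi,E)})\big)\big|_{\beta(g)}$, so that vanishing of the latter kernel (admissibility of $B$) forces vanishing of the former, which is precisely the condition for $\tau_{B_G}(L_{(\pi_G,E_G)})$ to be a graph. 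Your route could be repaired by additionally verifying that the core map of $\Phi_{B_G}$ is (conjugate to) $\mathrm{Id} + \widetilde{(d B, B)} \circ (\pi,E)^\sharp$ on $T^*M \times \mathbb{R}$, but that verification is missing.

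Second, identifying the induced generalized Lie bialgebroid as $((A,\phi_0),(A^*,X_0)_B)$ requires matching the Lie bracket on $\Gamma A^*$, not only the anchor and the $1$-cocycle. The base map of $\big(\tau_{B_G}(\pi_G,E_G)\big)^\sharp$ indeed gives $(a_*^B(-), X_0^B(-)) = (a_*(-),X_0(-)) \circ \psi_B^{-1}$, but the bracket $[\alpha,\beta]_*^B = \psi_B\,[\psi_B^{-1}(\alpha), \psi_B^{-1}(\beta)]_*$ must be computed separately from the defining formula $[\alpha,\beta]_*(x) = pr_1\big([(\widetilde{\alpha},0),(\widetilde{\beta},0)]_{\tau_{B_G}(\pi_G,E_G)}\big)(x)$; the paper does this using the Lie algebroid isomorphism of Proposition \ref{gauge-iso-lie} applied to the extensions $(\widetilde{\alpha},0)$, $(\widetilde{\beta},0)$. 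This step is absent from your outline, and since anchor and cocycle do not determine the bracket, it cannot simply be ``read off from the base map.''
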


To prove this theorem, we need the following lemma. The proof is straight-forward.

\begin{lemma}
 Let $G \rightrightarrows M$ be a Lie groupoid and $\sigma \in C^\infty(G)$ a multiplicative function. Let $B$ be any $1$-form on $M$
and take $B_G = e^\sigma \alpha^* B - \beta^* B \in \Omega^1(G)$. Then 
$\widetilde{(d B_G , B_G)} : TG \times \mathbb{R} \rightarrow T^*G \times \mathbb{R}$ is a Lie groupoid morphism.
\end{lemma}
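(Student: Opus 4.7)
The plan is to reduce the lemma to a single $\sigma$-multiplicative identity for $B_G$, and then to propagate that identity through the formula $\widetilde{(dB_G, B_G)}(X,f) = (i_X dB_G + f B_G, -i_X B_G)$ in order to verify the three defining conditions of a Lie groupoid morphism (source compatibility, target compatibility, multiplicativity).

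The first step is to establish the key identity
$$m^* B_G = \pi_1^* B_G + e^{\sigma \circ \pi_1}\, \pi_2^* B_G \qquad \text{on } G^{(2)},$$
where $m : G^{(2)} \to G$ is multiplication and $\pi_1, \pi_2 : G^{(2)} \to G$ are the projections. This follows by a direct expansion of $m^*(e^\sigma \alpha^* B - \beta^* B)$, using the groupoid relations $\alpha \circ m = \alpha \circ \pi_2$, $\beta \circ m = \beta \circ \pi_1$, the multiplicative condition $\sigma \circ m = \sigma \circ \pi_1 + \sigma \circ \pi_2$, and the observation that the unwanted cross-terms cancel because $\beta \circ \pi_2 = \alpha \circ \pi_1$ on $G^{(2)}$. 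Evaluated on a composable tangent pair $(X_g, Y_h)$, this identity yields the cocycle relation
$$B_G(X_g \oplus_{TG} Y_h) = B_G(X_g) + e^{\sigma(g)}\, B_G(Y_h),$$
formally identical to the defining relation of a contact groupoid.

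Differentiating produces the companion identity
$$m^* dB_G = \pi_1^* dB_G + e^{\sigma \circ \pi_1}\, d(\sigma \circ \pi_1) \wedge \pi_2^* B_G + e^{\sigma \circ \pi_1}\, \pi_2^* dB_G,$$
and the Leibniz term $e^{\sigma \circ \pi_1}\, d(\sigma \circ \pi_1) \wedge \pi_2^* B_G$ is precisely what reproduces the $e^{\sigma(g)}\, \zeta(d\sigma)|_g$ correction appearing in the twisted partial multiplication on $T^*G \times \mathbb{R}$. With these two identities the verification is mechanical: source and target compatibility amount to checking that $(\tilde\alpha)_\sigma \circ \widetilde{(dB_G, B_G)}$ and $(\tilde\beta)_\sigma \circ \widetilde{(dB_G, B_G)}$ depend only on $(\alpha_*)_\sigma(X_g, \lambda)$ and $(\beta_*)_\sigma(X_g, \lambda)$ respectively, giving a common well-defined base map $TM \times \mathbb{R} \to A^*$; the $e^{-\sigma(g)}$ prefactor in $(\tilde\alpha)_\sigma$ and the $d\sigma$-correction in $(\tilde\beta)_\sigma$ absorb the weightings from the $\sigma$-multiplicative identity. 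For multiplicativity, one inserts a composable pair $(X_g, \lambda), (Y_h, \mu)$ (where composability forces $\mu = X_g(\sigma) + \lambda$) into both sides of the desired equality and expands: the $\mathbb{R}$-components match via the cocycle relation of the first step, while the $T^*G$-components match because the Leibniz $d\sigma$-term from the differentiated identity reproduces exactly the $e^{\sigma(g)}\zeta(d\sigma)|_g$ twisting prescribed by the cotangent partial multiplication.

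The main obstacle is the bookkeeping during the multiplicativity step, where several $e^\sigma$-weights and $d\sigma$-corrections must be tracked simultaneously on both sides. However, the $\sigma$-multiplicative identity of the first step essentially forces every such piece into its prescribed slot, and no global or cohomological input is needed beyond these pointwise identities.
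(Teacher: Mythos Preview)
Your proposal is correct and is precisely the direct verification the paper has in mind: the paper's own proof consists of the single sentence ``The proof is straight-forward,'' so you have supplied exactly the details it omits. Your organizing observation---that $B_G$ satisfies the same $\sigma$-multiplicative cocycle identity as the contact form of a contact groupoid, and that differentiating it produces the $d\sigma$-correction matching the twist in the cotangent groupoid multiplication---is the natural way to make the computation transparent.
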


\noindent {\em Proof of Theorem \ref{final-theorem}.} 
One can easily verify that the map
$$ (\beta_* , Id) : ker~ \big(\tau_{B_G} (L_{(\pi_G, E_G)}) \big)|_g \rightarrow ker~ \big(  \tau_B (L_{(\pi,E)})   \big)|_{\beta (g)}, ~ (X_g , \lambda) \mapsto (\beta_* X_g , \lambda)$$
defines an isomorphism between the kernels of the respective Dirac-Jacobi structures. Since $B$ is $(\pi,E)$-admissible, we have $ker ~ \big(  \tau_B (L_{(\pi,E)})   \big) = 0$. Therefore,
$ker~ \big(\tau_{B_G} (L_{(\pi_G, E_G)}) \big) = 0$. Hence, the Dirac-Jacobi structure $\tau_{B_G} (L_{(\pi_G, E_G)})$ is given by the graph of a Jacobi structure. In other words,
$B_G$ is $(\pi_G, E_G)$-admissible.

Since $(\pi_G, E_G)^\sharp : T^*G \times \mathbb{R}  \longrightarrow TG \times \mathbb{R}$ and $\widetilde{(d B_G, B_G)} : TG \times \mathbb{R} \longrightarrow T^*G \times \mathbb{R}$ are groupoid morphism, the composition
$$ \big(  \tau_{B_G} (\pi_G, E_G) \big)^\sharp = (\pi_G, E_G)^\sharp \big(  Id ~+~ \widetilde{(d B_G , B_G)} \circ (\pi_G, E_G)^\sharp  \big)^{-1} =  (\pi_G, E_G)^\sharp \circ \Phi_G^{-1} : T^*G \times \mathbb{R}  \longrightarrow TG \times \mathbb{R}$$
is also a Lie groupoid morphism, where $\Phi_G =  Id + \widetilde{(d B_G , B_G)} \circ (\pi_G, E_G)^\sharp $ is the invertible bundle map. Hence, $(G \rightrightarrows M,  \tau_{B_G} (\pi_G, E_G) , \sigma )$ is a Jacobi groupoid.
Moreover, it follows from the above expression of the map $\big(  \tau_{B_G} (\pi_G, E_G) \big)^\sharp$ that
\begin{align*}
 \big(  \tau_{B_G} (\pi_G, E_G) \big)^\sharp |_{A^*} 
=&~ \big(a_* (-), X_0 (-)\big) \big(   Id ~ + ~ (a(-), \phi_0(-))^* \circ \widetilde{(d B, B)} \circ \big(a_* (-), X_0 (-)\big) \big)^{-1} \\
=&~ \big(a_* (-), X_0 (-)\big) \circ \psi_B^{-1}
= (a_*^B (-), X_0^B(-)).
\end{align*}
Finally, the Lie bracket of $\alpha, \beta \in \Gamma A^*$ induced from the Jacobi groupoid  $(G \rightrightarrows M,  \tau_{B_G} (\pi_G, E_G) , \sigma )$ is given by
\begin{align*}
pr_1 \big(~[(\widetilde{\alpha}, 0), (\widetilde{\beta}, 0)]_{\tau_{B_G}(\pi_G, E_G)}~ \big)|_{M}
=&~ pr_1 \big(  \Phi_G [~ \Phi_G^{-1} (\widetilde{\alpha}, 0) , \Phi_G^{-1} (\widetilde{\beta}, 0) ~ ]_{(\pi_G, E_G)} \big)\big|_{M} ~~~ \text{~(by Prop \ref{gauge-iso-lie})}\\
=&~ \psi_B  \big(  [\Phi_G^{-1} (\widetilde{\alpha}, 0), \Phi_G^{-1} (\widetilde{\beta}, 0) ]_{(\pi_G, E_G)}  \big)|_M \\
=&~ \psi_B \big( [ ~(~\widetilde{\psi_B^{-1} (\alpha)} , 0) ,  (~\widetilde{\psi_B^{-1} (\alpha)} , 0) ~ ]_{(\pi_G, E_G)}  \big)\big|_{M} \\
=&~ \psi_B [~\psi_B^{-1} (\alpha), \psi_B^{-1} (\beta)~]_* = [\alpha, \beta]_*^B.
\end{align*}
Hence, the result follows.

\begin{remark}\label{final-theorem-rem}
Note that Theorem \ref{final-thm} follows as a corollary of Theorem \ref{final-theorem} and Remark \ref{glb-remark}. Let $(G \rightrightarrows M, \eta, \sigma)$
be a contact groupoid with the Jacobi structure $(\pi,E)$ on $M$. Think the contact groupoid as a Jacobi groupoid with the Jacobi structure on $G$ induced by
$\eta$. Then its generalized Lie bialgebroid is given by $\big( (T^*M \times \mathbb{R}, (-E, 0)), (TM \times \mathbb{R}, (0,1))  \big)$. Let $B$ be a 
$(\pi,E)$-admissible $1$-form on $M$ with transformed Jacobi structure $\tau_B (\pi,E) = (\pi_B, E_B)$. Then by Remark \ref{glb-remark} the transformed generalized Lie bialgebroid
is isomorphic to $\big( (T^*M \times \mathbb{R}, (-E_B, 0)), (TM \times \mathbb{R}, (0,1))  \big)$. Therefore, by Theorem \ref{final-theorem} we have
$(G \rightrightarrows M, \tau_{B_G} (\eta), \sigma)$ is a contact groupoid integrating the generalized Lie bialgebroid $\big( (T^*M \times \mathbb{R}, (-E_B, 0)), (TM \times \mathbb{R}, (0,1))  \big)$.
In other words, $(G \rightrightarrows M, \eta - B_G, \sigma)$ is a contact groupoid integrating the Jacobi structure $\tau_B (\pi,E)$. Hence 
Theorem \ref{final-thm} follows.
\end{remark}

\medskip

\section{$B$-field transformations of generalized contact structures}\label{sec-7}
Generalized contact structures 
are odd analouge of generalized complex structures and generalization of contact structures \cite{wade2}.
A line bundle approach of this notion was studied in \cite{vitag-wade, jonas-luca}.
In this final section, we study some symmetries of generalized contact structures, namely, $B$-field transformations. 

\begin{defn}
 Let $M$ be a manifold of dimension $2n + 1$. A generalized contact structure on $M$ is a bundle map
$$\mathcal{I} : \mathcal{E}^1(M) \longrightarrow \mathcal{E}^1(M)$$
satisfying
\begin{center}
$\mathcal{I}^2 = Id, \qquad \qquad ~~ \mathcal{I}^* = - \mathcal{I} \qquad \text{ and } \qquad \mathcal{N}_{\mathcal{I}} = 0.$
\end{center}
Here $\mathcal{I}^*$ denote the adjoint of $\mathcal{I}$ with respect to non-degenerate pairing $\langle \langle -, - \rangle \rangle$ defined in (\ref{pairing}) and $\mathcal{N}_{\mathcal{I}}$
denote the Nijenhuis torsion of $\mathcal{I}$ with respect to the generalized Dorfman bracket $\llbracket-,-\rrbracket$. A manifold equipped with a generalized
contact structure is called a generalized contact manifold.
\end{defn}

\begin{exam}
Any contact manifold is a generalized contact manifold. In particular, if $\eta$ is a contact $1$-form on $M$ with Jacobi structure $(\pi,E)$, then as a matrix block
\begin{align}\label{matrix}
\mathcal{I} = \left( \begin{array}{cc}
0     &      (\pi,E)^\sharp\\
\widetilde{(d \eta, \eta)}     &       0
\end{array}
\right)
\end{align}
is a generalized contact structure on $M$.
\end{exam}

Motivated from the $B$-field transformations of generalized complex structures, here 
we define a similar type transformation for generalized contact structures.


Let $\mathcal{I}$ be a generalized contact structure on $M$. For any $1$-form $B$, consider the orthogonal automorphism $exp (B)$ of the bundle
$\mathcal{E}^1(M)$ via
\begin{align*}
exp (B) = \left( \begin{array}{cc}
Id     &      0\\
\widetilde{(d B, B)}     &       Id
\end{array}
\right).
\end{align*}
Then it is straight forward to verify that $\tau_B(\mathcal{I}) = exp (B) \circ \mathcal{I} \circ exp (-B)$ is another generalized contact structure on $M$.
This follows from the fact that $exp (B)$ preserves the generalized Dorfman bracket on $\mathcal{E}^1(M).$
The generalized contact structure $\tau_B(\mathcal{I})$ is called the $B$-field transformation of $\mathcal{I}$.

\begin{remark}
When the generalized contact structure $\mathcal{I}$ is given by a contact form $\eta$ as in (\ref{matrix}), the $B$-field transformation
$\tau_B(\mathcal{I})$ is given by
\begin{align*}
\tau_B(\mathcal{I}) = \left( \begin{array}{cccc}
- (\pi,E)^\sharp \widetilde{(d B , B)}      & &&  ~~~(\pi,E)^\sharp \\
   \widetilde{(d \eta, \eta)} - \widetilde{(d B, B)} (\pi,E)^\sharp  \widetilde{(d B, B)}    &  && ~~~  \widetilde{(d B , B)} (\pi,E)^\sharp
\end{array}
\right).
\end{align*}
Obviously, this is not given by any contact form. Therefore, $B$-field transformations of a contact form (considered as a generalized contact structure)
need not be contact. This holds if and only if $B = 0$.
\end{remark}

\medskip

%


\begin{thebibliography}{BFGM03}
\bibitem{bgg}
A. J. Bruce, K. Grabowska and J. Grabowski, Remarks on contact and Jacobi geometry, SIGMA Symmetry Integrability Geom. Methods Appl. 13 (2017), Paper No. 059, 22 pp.

\bibitem{burs}
H. Bursztyn,
On gauge transformations of Poisson structures, In {\it Quantum field theory and noncommutative geometry}, 89-112, 
Lecture Notes in Phys., 662, Springer, Berlin, 2005.

\bibitem{burs-radko}
H. Bursztyn and O. Radko, Gauge equivalence of Dirac structures and symplectic groupoids, Ann. Inst. Fourier (Grenoble) 53 (2003), no. 1, 309-337.

\bibitem{cra-sal}
M. Crainic and M. A. Salazar, Jacobi structures and Spencer operators, J. Math. Pures Appl. 103 (2015) 504-521.

\bibitem{cra-zhu}
M. Crainic and C. Zhu, Integrability of Jacobi and Poisson structures, Ann. Inst. Fourier (Grenoble) 57 (2007), no. 4, 1181-1216.

\bibitem{das0}
A. Das, On generalized Lie bialgebroids and Jacobi groupoids, Extracta Math. 31 (2016), no. 2, 199-225.

\bibitem{das}
A. Das,
Reduction of Nambu-Poisson manifolds by regular distributions, Math. Phys. Anal. Geom. 21 (2018), no. 1, 21:5.

\bibitem{grab-mar}
J. Grabowski and G. Marmo, Jacobi structures revisited, J. Phys. A 34 (2001), no. 49, 10975-10990.

\bibitem{igl-marr}
D. Iglesias and J. C. Marrero, Generalized Lie bialgebroids and Jacobi structures, J. Geom. Phys. 40 (2001), no. 2, 176-199.


\bibitem{igl-marr2}
D. Iglesias-Ponte and J. C. Marrero, Jacobi groupoids and generalized Lie bialgebroids, J. Geom. Phys. 48 (2003), no. 2-3, 385-425.

\bibitem{wade2}
D. Iglesias-Ponte and A. Wade, Contact manifolds and generalized complex structures, J. Geom. Phys. 53 (2005), no. 3, 249-258.

\bibitem{igl-wade}
D. Iglesias-Ponte and A. Wade, Integration of Dirac-Jacobi structures, J. Phys. A 39 (2006), no. 16, 4181-4190.

\bibitem{jurco-schupp-wess}
B. Jurco, P. Schupp and J. Wess, Noncommutative line bundle and Morita equivalence, Lett. Math. Phys. 61 (2002), no. 3, 171-186.

\bibitem{ker-sou}
Y. Kerbrat and Z. Souici-Benhammadi, Vari\'et\'es de Jacobi et groupoides de contact, C. R. Acad. Sci. Paris Ser. I Math. 317 (1993), no. 1, 81-86.

\bibitem{kirillov}
A. Kirillov, Local Lie algebras, Russ. Math. Surv. 31 (1976), no. 4, 55-75.

\bibitem{leon-lopez-marr-padron}
M. de Le\'on, B. L\'opez, J. C. Marrero and E. Padr\'on, Lichnerowicz-Jacobi cohomology and homology of Jacobi manifolds: modular class and duality, arXiv:math/9910079v1



\bibitem{lich}
A. Lichnerowicz, Les vari\'et\'es de Jacobi et leurs alg\'ebres de Lie associ\'ees. (French),
J. Math. Pures Appl. (9) 57 (1978), no. 4, 453-488.

\bibitem{mac-xu}
K. C. H. Mackenzie and P. Xu, Lie bialgebroids and Poisson groupoids, Duke Math. J. 73 (1994), no. 2, 415-452.

\bibitem{marle}
C.-M. Marle, On Jacobi manifolds and Jacobi bundles, {\em Symplectic geometry, groupoids, and integrable systems (Berkeley, CA, 1989)}, 227-246, Math. Sci. Res. Inst. Publ., 20, Springer, New York, 1991.

\bibitem{nunes-gallardo}
J. M. Nunes da Costa and J. Clemente-Gallardo, Dirac structures for generalized Lie bialgebroids, J. Phys. A 37 (2004), no. 7, 2671-2692. 

\bibitem{ortiz-thesis}
C. Ortiz, Multiplicative Dirac structures, PhD thesis, IMPA, 2009.


\bibitem{jonas-luca}
J. Schnitzer and L. Vitagliano, The local structure of generalized contact bundles, arXiv:1711.08310v1

\bibitem{sev-wein}
P. {\v S}evera and A. Weinstein, Poisson geometry with a $3$-form background, Progr. Theoret. Phys. Suppl. No. 144 (2001), 145-154.

\bibitem{vaisman}
I. Vaisman, Locally conformal symplectic manifolds, Internat. J. Math. Math. Sci. 8 (1985), no. 3, 521-536.

\bibitem{vitagliano}
L. Vitagliano, Dirac-Jacobi bundles, arXiv:1502.05420, accepted for publication on J. Symplectic Geom.

\bibitem{vitag-wade}
L. Vitagliano and A. Wade, Generalized contact bundles, C. R. Acad. Sci. Paris, Ser. I 354 (2016) 313-317. 

\bibitem{wade1}
A. Wade, Conformal Dirac structures, Lett. Math. Phys. 53 (2000), no. 4, 331-348.

\bibitem{wade3}
A. Wade, Local structure of generalized contact manifolds, Diff. Geom. Appl. 30 (2012) 124-135.

\end{thebibliography}
\end{document}